\crefname{lemma}{Lemma}{Lemmas}
\crefname{fact}{Fact}{Facts}
\crefname{theorem}{Theorem}{Theorems}
\crefname{corollary}{Corollary}{Corollaries}
\crefname{claim}{Claim}{Claims}
\crefname{example}{Example}{Examples}
\crefname{problem}{Problem}{Problems}
\crefname{definition}{Definition}{Definitions}
\crefname{assumption}{Assumption}{Assumptions}
\crefname{subsection}{Subsection}{Subsections}
\crefname{section}{Section}{Sections}
\newtheorem{theorem}{Theorem}[section]
\newtheorem*{theorem*}{Theorem}
\newtheorem*{proposition*}{Proposition}
\newtheorem*{lemma*}{Lemma}
\newtheorem*{conjecture*}{Conjecture}
\newtheorem*{fact*}{Fact}
\newtheorem*{exercise*}{Exercise}
\newtheorem*{hypothesis*}{Hypothesis}
\theoremstyle{definition}
\newtheorem{exercise-easy}[theorem]{Exercise}
\newtheorem{exercise-med}[theorem]{Exercise}
\newtheorem{exercise-hard}[theorem]{Exercise$^\star$}
\newtheorem*{claim*}{Claim}
\newtheorem{remark}[theorem]{Remark}
\newtheorem*{remark*}{Remark}
\newtheorem*{observation*}{Observation}
\newcommand{\savehyperref}[2]{\texorpdfstring{\hyperref[#1]{#2}}{#2}}
\newcommand{\eps}{\varepsilon}
\let\E\relax
\DeclareMathOperator*{\E}{\mathbb{E}}
\let\Pr\relax
\DeclareMathOperator*{\Pr}{\mathbb{P}}
\begin{document}

\author{
  Jelani Nelson\thanks{UC Berkeley. \texttt{minilek@berkeley.edu}. Supported by NSF
    award CCF-1951384, ONR grant N00014-18-1-2562, ONR DORECG award N00014-17-1-2127, an Alfred P.\ Sloan Research Fellowship, and a Google Faculty Research Award.}
    \and Huacheng Yu\thanks{Princeton University. \texttt{yuhch123@gmail.com}.}
}

\title{Optimal Bounds for Approximate Counting}

\maketitle

\thispagestyle{empty}
\setcounter{page}{0}

\begin{abstract}
Storing a counter incremented $N$ times would naively consume $O(\log N)$ bits of memory.  In 1978 Morris described the very first streaming algorithm: the ``Morris Counter'' \cite{Morris78}. His algorithm's space bound is a random variable, and it has been shown to be $O(\log\log N + \log(1/\eps) + \log(1/\delta))$ bits in expectation to provide a $(1+\eps)$-approximation with probability $1-\delta$ to the counter's value. We provide a new simple algorithm with a simple analysis showing that randomized space $O(\log\log N + \log(1/\eps) + \log\log(1/\delta))$ bits suffice for the same task, i.e.\ an exponentially improved dependence on the inverse failure probability. We then provide a new analysis showing that the original Morris Counter itself, after a minor but necessary tweak, actually also enjoys this same improved upper bound. Lastly, we prove a new lower bound for this task showing optimality of our upper bound. We thus completely resolve the asymptotic space complexity of approximate counting. Furthermore all our constants are explicit, and our lower bound and tightest upper bound differ by a multiplicative factor of at most $3+o(1)$.
 \end{abstract}

\newpage

\section{Introduction}\label{sec:intro}

Suppose one wishes to maintain an integer $N$, initialized to zero, subject to a sequence of increment operations. Maintaining this counter exactly can be accomplished using $\lceil \log_2 N\rceil$ bits. In the first example of a non-trivial streaming algorithm, Morris gave a Monte Carlo randomized ``approximate counter'', which lets one report a constant factor approximation to $N$ with large probability while using $o(\log N)$ bits of memory. His algorithm, the ``Morris Counter'', uses $O(\log\log N)$ bits \cite{Morris78}. The Morris Counter was later analyzed in more detail \cite{Flajolet85,GronemeierS09}, where it was shown that $O(\log\log N + \log(1/\eps) + \log(1/\delta))$ bits of memory is sufficient to return a $(1+\eps)$ approximation with success probability $1-\delta$; the space consumption is a random variable, and this quantity is its expectation (and in fact, the space bound holds with large probability). Further historical details can be found in \cite{Lumbroso18}.

Our main contribution is a new, simple improved algorithm and matching lower bound. In particular, we show that the correct dependence on the inverse failure probability is only doubly and not singly logarithmic. This implies for example that $O(\log\log N)$ memory suffices to have failure probability $1/\mathop{poly}(N)$, whereas previous Morris Counter analyses only guaranteed failure probability $1/\mathop{poly}(\log N)$ in such space. 

\begin{theorem}\label{thm:main}
  For any $\eps,\delta\in(0,1/2)$ there is a randomized algorithm for approximate counting which outputs $\hat N$ satisfying
  \begin{equation}
  \Pr\left(\left|N - \hat N\right| > \eps N\right) < \delta \label{eqn:approx-thm} .
\end{equation}
The memory in bits is a random variable $M$ such that for any $S > C(\log\log N + \log(1/\eps) + \log\log(1/\delta))$,\footnote{In fact our analysis is more refined and produces explicit constant factors; see Theorem~\ref{thm:space-analysis} and Remark~\ref{rem:better-space}.}
\begin{equation}
\Pr(M > S) < \exp(-C' \exp(C'' S)) .\label{eqn:space}
\end{equation}

Furthermore, our algorithm is asymptotically optimal up to a constant factor: any randomized algorithm which is promised that the final counter is in the set $\{1,\ldots,n\}$ and which satisfies \cref{eqn:approx-thm} must use $\Omega(\min\{\log n, \log\log n + \log(1/\eps) + \log\log(1/\delta)\})$ bits of memory with high probability.
\end{theorem}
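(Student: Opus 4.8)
The plan is to analyze a single Morris‑type counter with a tuned base. Maintain an integer $X$ (initially $0$); on each increment set $X\leftarrow X+1$ with probability $(1+a)^{-X}$, where $a:=\Theta(\eps^2/\log(1/\delta))$, and report $\hat N:=((1+a)^X-1)/a$; for $N$ below a threshold $\tau=\mathrm{poly}(1/\eps,\log(1/\delta))$ count exactly instead (this costs $O(\log(1/\eps)+\log\log(1/\delta))$ bits, within budget, and sidesteps the known bad behaviour of Morris counting at tiny counts — arguably the ``minor but necessary tweak''). Writing $Y_n:=(1+a)^X$ after $n$ increments, a one‑line computation gives $\E[Y_{n+1}\mid Y_n]=Y_n+a$, so $\E\hat N=N$; the substance is concentration. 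I would show that with probability $1-\delta$, $X$ lies within $O(\eps/a)$ of $\mu_N:=\log_{1+a}(1+aN)$, which suffices since shifting $X$ by $\pm\eps/a$ multiplies $\hat N$ by $(1+a)^{\pm\eps/a}=e^{\pm\Theta(\eps)}$. For this I would use $\{X\ge k\}=\{\sum_{j<k}T_j\le N\}$ with $T_j$ independent geometrics of mean $(1+a)^j$, and a Chernoff/Bernstein estimate for these sums: because $\sum_{j<k}T_j$ has mean $((1+a)^k-1)/a$ and is dominated by its top few terms, it concentrates \emph{multiplicatively} with (at least) sub‑exponential tails, which is exactly strong enough that $\Pr[|X-\mu_N|>\eps/a]\le\exp(-\Omega(\eps^2/a))=\delta$. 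The memory tail \cref{eqn:space} drops out of the same inequality: the bit‑count is $O(1)+\log_2 X$, and for $S$ above the stated threshold, $X>2^{\Omega(S)}$ forces a sum of mean $\gg N$ to be $\le N$, an event of probability at most $N\exp(-\Omega(a\,2^{\Omega(S)}))<\exp(-C'\exp(C''S))$.

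\textbf{Lower bound, setup and easy parts.} Model the algorithm as a time‑homogeneous Markov chain $M$ on at most $s:=2^{\text{space}}$ states with a fixed start state $q_0$ and a randomized output map, so the state after $n$ increments is distributed as $q_0M^n$. I would prove $\log s=\Omega(\log\log n)$, $\log s=\Omega(\log(1/\eps))$ and $\log s=\Omega(\log\log(1/\delta))$ separately and combine via $\max\ge\tfrac13(\mathrm{sum})$, the degenerate regimes being absorbed by the outer $\min\{\log n,\cdot\}$. For the first term: correctness forces, whenever $n'/n$ is a large constant, the heavy sets $S_n:=\{q:\Pr[\text{state}_n=q]\ge 10\delta\}$ to be disjoint across geometrically spaced times; when $\delta$ is small this bounds the number of dyadic scales by $s$ outright, and when $\delta$ is constant one argues instead via the recurrent/mixing structure of $M$ that $q_0M^n$ cannot stay informative about $n$ once $n\gg 2^{\mathrm{poly}(s)}$. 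For the second term: at $n\approx1/\eps^2$ a $(1+\eps)$‑approximation must distinguish $\Omega(1/\eps)$ equally spaced counts inside one dyadic block, so $\Omega(1/\eps)$ distinct states must carry non‑negligible mass there. The third term is the new ingredient.

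\textbf{The heart: the $\log\log(1/\delta)$ bound.} I would partition the states into ``bands'' by the count value at which they are active (correctness makes each state heavy only during a window of counts of multiplicative width $O(1)$, so there are $\Theta(\log n)$ bands). Entering a band and then leaving it for the next is a \emph{timer gadget}: a sub‑Markov chain on that band's states which, from its entry distribution, must reach the exit set after a number of steps concentrated within a constant factor of a target $T$, with failure probability only $O(\delta)$ — a mistimed exit causes an over‑ or under‑estimate at some time inside the band's window. I would then prove the key lemma: a timer gadget on $r$ states has failure probability $\ge e^{-O(r)}$. Equivalently, the hitting time of an absorbing set in an $r$‑state chain is a phase‑type distribution of order $r$, and such a distribution cannot be concentrated about its mean within a constant factor any better than an Erlang‑$r$ variable, whose deviation probabilities are $e^{-\Theta(r)}$. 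Hence every band needs $\Omega(\log(1/\delta))$ states, so $s=\Omega(\log n\cdot\log(1/\delta))$ and $\log s=\Omega(\log\log n+\log\log(1/\delta))$.

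\textbf{Main obstacle.} The crux is making the $\log\log(1/\delta)$ argument rigorous for \emph{arbitrary} chains rather than clean path‑like ones: a general chain need not split neatly into bands (states can act as shortcuts, the chain may be periodic or have several recurrent classes), and passing from ``overall failure $\le\delta$'' to ``each band's timer fails with probability $\le O(\delta)$'' is delicate because a mistimed traversal is an event spanning many steps, not a single fixed time — a crude union bound over the $\approx 2^\ell$ times in band $\ell$ is fatal. I expect the proof must work one carefully chosen transition at a time, conditioning on the run being well‑behaved so far, and must invoke a robust form of the phase‑type anti‑concentration lemma (e.g.\ a lower bound on the upper tail $\Pr[\tau>ct]$ of an $r$‑state hitting time in terms of $r$ and its median); that single estimate is the one genuinely technical point.
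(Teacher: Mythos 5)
Your upper bound is essentially the paper's own analysis of the (tweaked) Morris counter: decompose the count into independent geometric waiting times $Z_i$ with means $(1+a)^i$, prove multiplicative concentration of the prefix sums via the moment generating function with $a=\Theta(\eps^2/\log(1/\delta))$, count exactly below a $\mathrm{poly}(1/\eps,\log(1/\delta))$ threshold, and get the doubly exponential memory tail from a union bound over the remaining increments once $X$ is implausibly large. That outline is sound and matches \cref{thm:morris} and the space argument used for \cref{eqn:space}; the only thing to check carefully is the MGF computation itself, which the paper carries out explicitly.

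The lower bound, however, has a genuine gap exactly where you place the weight. The key lemma ``a timer gadget on $r$ states has failure probability $\ge e^{-O(r)}$'' is false as stated: a deterministic unary chain on $r$ states hits its exit after exactly $r$ steps with failure probability $0$, so the lemma can only hold when the target duration $T$ is much larger than $r$, and even in that regime it does not follow from the standard phase-type facts you allude to (Aldous--Shepp-type results say Erlang-$r$ minimizes the \emph{variance} among order-$r$ phase-type laws, which gives polynomial, not $e^{-\Theta(r)}$, anti-concentration; the tail lower bound you need is precisely the unproved technical core). The band decomposition is also shakier than you suggest: the heaviness argument localizes only states that are ever heavy, while a traversal of a band may route through states of mass $<10\delta$ that are shared across bands, so ``a sub-Markov chain on that band's states'' of order equal to the band's state count is not well defined, the per-band conclusion ``$\Omega(\log(1/\delta))$ states per band'' (and hence the product bound $s\ge\Omega(\log n\cdot\log(1/\delta))$) does not follow, and converting global failure $\delta$ into per-band timer failure $O(\delta)$ over windows of $\approx 2^{\ell}$ steps is the delicate accounting you yourself flag as unresolved; likewise the constant-$\delta$ fallback for the $\log\log n$ term (``recurrent/mixing structure'') is only a hand-wave. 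The paper avoids all of this structural analysis (see \cref{sec_lb}): it first derandomizes the purported $S$-bit algorithm by always taking the modal transition, which inflates the failure probability by at most $2^{S(N+1)}$, and applies this only at counts $N\le 4T$ with $T=\lfloor\min\{n/4,\sqrt{\log(1/\delta)}\}\rfloor$ so the inflated error stays below $1/3$; a pumping argument on the resulting deterministic automaton then yields $S=\Omega(\min\{\log n,\log\log(1/\delta)\})$, and a separate randomness-fixing counting argument over $\Theta((1/\eps)\log(\eps n+1))$ geometrically spaced counts yields $S=\Omega(\min\{\log n,\log\log n+\log(1/\eps)\})$. Unless you can prove the robust hitting-time anti-concentration lemma and the band reduction, your lower bound is incomplete, and the derandomization route is the much shorter path to the stated $\Omega(\min\{\log n,\log\log n+\log(1/\eps)+\log\log(1/\delta)\})$ bound.
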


Note the first term in the $\min$ of the lower bound of \cref{thm:main} is matched by a deterministic counter. We further note the space usage of the Morris Counter is also a random variable which satisfies a bound similar to \cref{eqn:space}. Next, we show that the Morris Counter itself parametrized to use the same space bound also achieves \eqref{eqn:approx-thm} as long as the counter $N$ is sufficiently large, i.e.\ at least some value $N_\delta = \Omega(\log(1/\delta))$. This is a mild restriction, since one can simply maintain a deterministic counter in parallel to the Morris Counter up to the value $N_\delta+1$. Then to answer queries, if the counter is at most $N_\delta$, we return it; else if it equals $N_\delta + 1$, we return the estimator based on the Morris Counter. As we show in the appendix, this minor tweak is necessary; without it, the Morris Counter would not achieve success probability $1-\delta$ in the desired space. We call this slight tweak ``Morris+'', which is similar to a method used in \cite{GronemeierS09}. Our next theorem provides an improved analysis of Morris+. All logarithms in this paper are base $2$, unless it is stated otherwise.

\begin{theorem}\label{thm:morris}
  For any $\eps,\delta\in(0,1/2)$ Morris+, instantiated with appropriate parameters, uses $\log\log N + 2\log(1/\eps) + \log\log(1/\delta)+O(1)$ bits of memory with high probability and outputs $\hat N$ satisfying
  \begin{equation}
  \Pr\left(\left|N - \hat N\right| > \eps N\right) < \delta \label{eqn:morris} .
\end{equation}
\end{theorem}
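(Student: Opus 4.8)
The plan is to analyze Morris+ run with base $1+a$ for $a=\Theta(\eps^2/\log(1/\delta))$ and prefix threshold $N_\delta=\Theta(\log^2(1/\delta)/\eps^3)$ — a polynomial, so that the exact prefix counter costs only $O(\log N_\delta)=O(\log(1/\eps)+\log\log(1/\delta))$ bits. Recall the Morris word is an integer $X$ (initially $0$) updated by $X\gets X+1$ with probability $(1+a)^{-X}$, run in parallel with the saturating exact counter, and the reported estimate once $N>N_\delta$ is $\hat N=((1+a)^{X_N}-1)/a$. For $N\le N_\delta$ the reported value is exact, so the work is: (i) for $N>N_\delta$, bound $\Pr[\hat N>(1+\eps)N]$ and $\Pr[\hat N<(1-\eps)N]$ each by $\delta/2$; (ii) bound the space. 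Throughout write $Y_N:=(1+a)^{X_N}$, so $\hat N=(Y_N-1)/a$ and, conditioned on $Y_N$, the variable $Y_{N+1}$ equals $(1+a)Y_N$ with probability $1/Y_N$ and $Y_N$ otherwise.

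The engine for the upper tail is the method of moments. The conditioning above gives, for every integer $p\ge 1$, the exact recursion $\E[Y_{N+1}^p]=\E[Y_N^p]+((1+a)^p-1)\E[Y_N^{p-1}]$ with $\E[Y_0^p]=\E[Y_N^0]=1$, so $\E[Y_N^p]$ is a degree-$p$ polynomial in $N$; summing the recursion in the binomial basis yields the closed form $\E[Y_N^p]=\sum_{i=0}^{p}\binom{N}{i}\prod_{q=p-i+1}^{p}((1+a)^q-1)$ (empty product $=1$). One then shows the top-degree term dominates once $N$ is polynomially large in $p$: since $\binom{N}{i}/\binom{N}{p}\le(p/(N-p))^{p-i}$ and $\prod_{q=p-i+1}^{p}(\cdot)/\prod_{q=1}^{p}(\cdot)=1/\prod_{q=1}^{p-i}((1+a)^q-1)\le 1/(a^{p-i}(p-i)!)$ (using $(1+a)^q-1\ge aq$), the $i$-th term is at most $\frac{1}{(p-i)!}\bigl(\frac{p}{a(N-p)}\bigr)^{p-i}$ times the top term, and this geometric-type tail over $i<p$ is below $1$ as soon as $N\gtrsim p/a$; combined with $\prod_{q=1}^{p}((1+a)^q-1)\le a^p p!\,(1+a)^{\binom p2}\le a^p p!\,e^{ap^2/2}$ this gives $\E[Y_N^p]\le 2(aN)^p e^{ap^2/2}$ for $N\ge 3p/a$. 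Since $\hat N>(1+\eps)N$ iff $Y_N>1+a(1+\eps)N$, Markov on $Y_N^p$ gives $\Pr[\hat N>(1+\eps)N]\le 2e^{ap^2/2}/(1+\eps)^p$; optimizing over $p=\Theta(\eps/a)=\Theta(\log(1/\delta)/\eps)$ drives this to $e^{-\Omega(\eps^2/a)}\le\delta/2$, and the hypothesis $N\ge 3p/a$ is exactly what $N_\delta$ was chosen to guarantee.

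For the lower tail I would switch to waiting times: $\hat N<(1-\eps)N$ is equivalent to $X_N\le r$ with $r=\lceil\log_{1+a}(1+a(1-\eps)N)\rceil-1$, which is equivalent to $T_0+\cdots+T_r>N$, where $T_\ell$ is the number of increments spent with $X$ equal to $\ell$ and the $T_\ell$ are independent with $T_\ell\sim\mathrm{Geom}((1+a)^{-\ell})$. One computes $\E[\sum_{\ell\le r}T_\ell]=((1+a)^{r+1}-1)/a\le(1-\eps)N+aN+1\le(1-\tfrac34\eps)N$ (using $a\le\eps/8$ and $N\ge N_\delta$), together with $\sum_{\ell\le r}\mathrm{Var}(T_\ell)=\Theta(aN^2)$ and $\max_\ell\E T_\ell=\Theta(aN)$, so a Bernstein bound for sums of independent geometric random variables gives $\Pr[\sum_\ell T_\ell>N]\le\exp(-\Omega(\eps^2N^2/(aN^2)))=e^{-\Omega(\eps^2/a)}\le\delta/2$; a union bound over the two tails proves \cref{eqn:morris}. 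For the space, the exact prefix counter uses $O(\log N_\delta)=O(\log(1/\eps)+\log\log(1/\delta))$ bits deterministically, while Markov on $Y_N$ gives $\Pr[X_N>t]\le(1+aN)(1+a)^{-t}$, so except with probability $\delta$ we have $X_N\le\frac2a(\ln(1+aN)+\ln(1/\delta))$ and the Morris word occupies $O(\log(1/a)+\log\log N+\log\log(1/\delta))=O(\log\log N+\log(1/\eps)+\log\log(1/\delta))$ bits; in fact $\Pr[\text{space}>S]$ decays doubly exponentially in $S$ past this threshold, in the spirit of \cref{eqn:space}.

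I expect the main obstacle to be the moment bound of step (i): showing that the lower-order coefficients of the polynomial $\E[Y_N^p]$ are small enough that the top term dominates already at $N=\mathrm{poly}(p)$ rather than at $N=\exp(\Theta(p))$. This is precisely where the exact prefix is forced and why its length must be polynomial — for small $N$ the bare small-base Morris estimator has a constant-probability large relative error, so $N\le N_\delta$ must be handled exactly, and only a polynomial $N_\delta$ keeps the prefix cost at $O(\log(1/\eps)+\log\log(1/\delta))$ rather than $O(\log(1/\eps)+\log(1/\delta))$. A secondary, routine nuisance is carrying the constants in $a$ and $N_\delta$ so that both tails fall below $\delta/2$ simultaneously, and handling the integrality of $p$ and of $\log_{1+a}(\cdot)$.
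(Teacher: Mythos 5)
Your proposal is correct, and it overlaps the paper's proof only in part. The paper handles \emph{both} tails through the same waiting-time decomposition you use for the lower tail: it writes the time for $X$ to pass level $k$ as a sum of independent geometrics $Z_i\sim\mathrm{Geom}((1+a)^{-i})$, computes the moment generating function explicitly, and carries out a Chernoff argument with an explicit tilt $t$ on each side, obtaining failure probability $2e^{-\eps^2/8a}$ for every $k>1/a$; it then sets $a=\eps^2/(8\ln(1/\delta))$ and uses the deterministic prefix only up to $8/a=\Theta(\log(1/\delta)/\eps^2)$. Your lower tail is the same decomposition, except you invoke an off-the-shelf Bernstein/Janson-type bound for sums of independent geometrics instead of deriving it; for a self-contained write-up you would end up doing essentially the paper's MGF computation anyway. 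Your upper tail is genuinely different: the exact moment recursion and closed form for $\E[(1+a)^{pX_N}]$, the top-term domination lemma, and Markov with $p=\Theta(\eps/a)$. This is valid (I checked the recursion, the closed form, the domination estimate for $N\ge 3p/a$, and the optimization), and it buys an exact formula with no tilt parameter to tune, but it costs you the larger prefix threshold $N_\delta=\Theta(\log^2(1/\delta)/\eps^3)$ --- needed so that the top binomial term dominates --- whereas the paper's waiting-time argument works already from $N=\Theta(\log(1/\delta)/\eps^2)$, which the appendix shows is close to the true transition point $\Theta(1/a)$ up to $\mathrm{poly}(1/\eps)$ factors. Since only $\log N_\delta$ enters the space bound, this difference is asymptotically harmless, and your space accounting (prefix of $O(\log(1/\eps)+\log\log(1/\delta))$ bits plus $O(\log X)$ bits with $X$ controlled by Markov on $\E[(1+a)^{X_N}]=1+aN$) matches the paper's conclusion; the only stylistic inefficiency is that the waiting-time machinery you already set up for the lower tail would have handled the upper tail symmetrically, making the moment computation unnecessary.
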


Though we provide two proofs of the same upper bound, we believe both have value. One perhaps pedagogical advantage of Theorem~\ref{thm:main} is that the new algorithm we provide is designed with the analysis in mind, leading to an overall proof of our novel optimal upper bound that is both short and intuitive. That is, one reads the argument and feels they clearly understand ``why'' the upper bound is what it is. Meanwhile, the advantage of Theorem~\ref{thm:morris} is that it provides a tight analysis of an algorithm commonly used in practice, albeit at the pedagogical cost that the proof of the theorem boils down to a technical calculation, and the reason the final bound comes out the way it does is arguably less intuitive.

Given that most modern machines have much more than $\log N$ bits of memory for even for $N$ being on the order of the number of particles in the universe, one might wonder whether approximate counting is of real importance or merely a purely intellectual pursuit. An application to keep in mind is not that there is merely one counter, but we may wish to maintain many such counters. In a real such application the number of approximate counters could be very large, and so cutting the number of bits per counter by even a constant factor could be of value. Indeed this was Morris' own original motivation: he needed to keep track of not only one counter, but $26^3$ counters, to keep trigram counts as part of the spellchecker \textsf{typo} \cite{Lumbroso18}. An example of a real such scenario in the modern day is the implementation of the ``Least Frequently Used'' (LFU) cache eviction policy in Redis, one of the most popular in-memory databases. The Redis implementation of this eviction policy needs to keep track of a counter for each key in the database, corresponding to the number of times it has been queried recently. To save memory, these counters are in fact implemented as approximate counters \cite{Redis}.

This motivating perspective also reveals that typically the memory requirement to calculate the state transition of the approximate counter after an increment, or to answer a query, is much less important; rather, minimizing the memory required to maintain program state is of higher practical relevance, as that affects total storage. Furthermore, if we are maintaining $M$ counters then it is natural to want $\delta \ll 1/M$ so that each counter is approximately correct with high probability. If $M$ is large, then requiring $\log(1/\delta) \ge \log M$ bits per counter may provide no benefit over a naive $\log N$ bit counter for realistic values of $N$.

In addition to potential practical relevance, from a theoretical perspective ``maintaining a counter'' is a natural problem and as such the Morris Counter has found applications to other streaming problems. For example, Jayaram and Woodruff showed that for $p\in(0,1]$ an approximate counter can be used effectively as a subroutine in an algorithm for approximating the $p$th moment of an insertion-only stream up to $1+\eps$ in $\tilde O(1/\eps^2 + \log n)$ bits of space \cite{JayaramW19}, improving over a derandomization of an algorithm of Indyk that uses $O(\eps^{-2}\log n)$ bits \cite{Indyk06,KaneNW10}. Approximate counting also finds use in approximating large frequency moments \cite{AlonMS99,GronemeierS09}, approximate reservoir sampling \cite{GronemeierS09}, approximating the number of inversion when streaming over a permutation \cite{AjtaiJKS02}, and $\ell_1$ heavy hitters in insertion-only streams \cite{BhattacharyyaDW19}.

\subsection{Comparison with previous bounds from \cite{Flajolet85}}
As we discuss in \cref{sec:overview}, the Morris Counter works by storing a counter $X$ and incrementing it with probability $1/(1+a)^X$ per update for some parameter $a>0$. The work \cite{Flajolet85} characterized the behavior of the Morris algorithm {\it exactly} when $a=1$. Unfortunately, the Morris Counter for $a=1$, which uses $O(\log\log N)$ bits of memory with high probability (which is $O(\log\log N + \log\log(1/\delta))$ for $\delta = 1/\mathop{poly}(N)$), does not enjoy constant factor approximation with success probability any better than a constant even for large $N$, let alone with probability $1 - 1/\mathop{poly}(N)$. This failure of the Morris Counter to achieve very high success probability for $a=1$ is implied by the exact characterization of the algorithm given in \cite{Flajolet85} itself; Proposition 3 of that work implies that the probability that $X$ fails to be in the interval $[\log_2 N - C, \log_2 N + C]$ equals a constant (depending on $C$), and $X$ being in that interval is required for the Morris Counter to provide a $2^C$-approximation. Thus, the failure probability when $a=1$ is not even $o(1)$. Our \cref{thm:morris} reveals though that the Morris Counter with $a = \Theta(1/\log N)$ {\it does} achieve failure probability $1/\mathop{poly}(N)$, which is ``for free'' (up to a constant factor) compared with $a=1$ since this smaller setting of $a$ still only requires the Morris Counter to use $O(\log\log N)$ bits of memory with high probability.

\cite{Flajolet85} does have some discussion on using smaller $a$. Specifically, \cite[Section 5]{Flajolet85} mentions that if one wants error better than the case $a=1$ to estimate $N$, one can either average independent counters or change base, and that the former has ``an effect similar to'' the latter. A variance bound is then given for estimating $N$ when using arbitrary $a$. This variance bound seems to reveal though that the effects of averaging versus changing base are not similar from a computational complexity perspective: the former requires averaging $\Omega(1/\eps^2)$ copies of the counter, blowing up the space complexity by $1/\eps^2$. The latter leads to a space bound depending only on $O(\log(1/\eps))$. Both yield $O(\log(1/\delta))$ space dependence on the failure probability $\delta$. Equation (46) of \cite{Flajolet85} does give an explicit sum-product formula for the exact probabilities $P_{n,\ell}$ that the counter exactly equals $\ell$ after $n$ increments, but this formula is not readily prescriptive for how $a$ should be set in order to achieve relative error $1+\eps$ with failure probability $\delta$.

\subsection{Overview of approach}\label{sec:overview}
We first explain the idea behind the Morris Counter. The traditional, deterministic and exact counter stores an integer $X$, initialized to zero. After every increment to $N$, we increment $X$ with probability $1.0^X$, i.e.\ we always increment it. Thus we can ``estimate'' $N$ as $X$, and this estimator has zero variance and is unbiased, at the cost of using $\log N$ memory. Morris instead increments $X$ with probability $0.5^X$; this trades off variance for memory. Specifically, one can show that $\E[2^X - 1] = N$, though the variance only satisfies $\mathrm{Var}[2^X - 1] = N(N-1)/2$. A natural idea of Morris is then to change the base of the exponential when deciding the probability to increment $X$, which turns out to provide a smooth tradeoff between memory and space consumption. Specifically, for any $a>0$ if incrementing $X$ with probability $1/(1+a)^X$, the expression $a^{-1}((1+a)^X - 1)$ is an unbiased estimator of $N$ with variance $aN(N-1)/2$ (we call the Morris Counter with this parameterization ``Morris($a$)''). Setting $a=2\eps^2\delta$, one obtains the guarantee \cref{eqn:approx-thm} via Chebyshev's inequality. Note that the space consumption $S:=\lceil \log_2 X\rceil$ is a random variable, but is at most $O(\log\log N + \log(1/\eps) + \log(1/\delta))$ with high probability. This is because for $C > 2$, once $X > Z := (\log N/(2\eps^2\delta))^C$, by a union bound the probability that any of the remaining at most $N$ increments causes $X$ to increment even once more is at most $N (1+2\eps^2\delta)^{-Z} < e^{-(\log N/(2\eps^2\delta))^{C-1}} < N^{-\omega(1)}$ (using that $(1-r)^{1/r} < 1/e$ for $r>0$). Thus, with high probability the Morris Counter uses at most $O(\log Z) = O(\log\log N + \log(1/\eps) + \log(1/\delta))$ bits of memory.

We now describe our new algorithm. First, we consider a promise decision problem: given some $T>1$ and $\eps\in(0,1)$, decide whether $N < (1-\eps/10)T$ or $N>(1+\eps/10)T$ when promised that one of the two holds. We can solve this decision problem as follows. We store a counter $Y$ in memory, initialized to $0$. Set $\alpha = \min\{1, C \log(1/\eta)/(\eps^2 T)\}$ for $C$ a large constant and $\eta\in(0,1)$ a parameter to be set. For each increment to $N$, if $Y \le \alpha T$ then increment $Y$ with probability $\alpha$; else do nothing. At query time, we declare $N > (1+\eps/10)T$ iff $Y > \alpha T$. A Chernoff bound shows that this procedure is correct with probability at least $1-\eta$. Furthermore the memory consumed is guaranteed to be $O(\log(\alpha T)) = O(\log(1/\eps) + \log\log(1/\eta))$.

Now to solve the full approximate counting problem, and not just the decision problem, we solve multiple instantiations of the above promise problem in sequence, where in iteration $j$ we use the threshold $T_j = (1+\eps)^j$ and increment probability $\alpha_j = \min\{1, C\log(1/\eta_j)/(\eps^3 T_j)\}$ for $\eta_j < C\delta/j^2$ (chosen so that by a union bound, the probability that we ever fail to solve the promise problem in any iteration $j$ is at most $\sum_j \eta_j \le \delta$). When $Y$ reaches the value $\alpha_j T_j$, we increase $j$ and correspondingly set $Y\leftarrow \lfloor Y\cdot \alpha_{j+1} /\alpha_j \rfloor$ (which is ``correct in expectation'', since the number of increments we would have done in expectation with parameter $\alpha_{j+1}$ is an $\alpha_j/\alpha_{j+1} \approx 1+\eps$ factor less). To answer a query for $N$, we simply return $T_j$. The adjustment from $\eps^2$ to $\eps^3$ in $\alpha_j$ is for technical reasons (see the proof of \cref{thm:correctness}).

We next provide an improved analysis of Morris' original algorithm. To do so, we define the random variable $Z_i$ to be the number of increments that Morris($a$), run for an infinite number of increments, would have its counter $X$ equal to $i$ before incrementing to $X=i+1$. Then $Z_i$ is a geometric random variable with parameter $1/(1+a)^i$, and we are able to show the desired behavior of Morris($a$) by proving concentration bounds on prefix sums of the $Z_i$ via analyzing its moment-generating function.

Our new lower bound comes from showing that a randomized approximate counter using space $S$ can be made deterministic with no increased space cost at the cost of increasing its failure probability by factors that grow with $S$. If $S$ is smaller than a certain threshold (the lower bound we are trying to prove), this argument leads to a correct space-$o(\log n)$ deterministic algorithm for the problem, which is impossible, and thus the space-$S$ algorithm for $S$ so small could not have existed.

\subsection{Notation}\label{sec:notation}
We use $C, C', C''$ to denote universal positive constants, which may change from line to line. We also use $A \pm B$ to denote a value in the interval $[A-B, A+B]$, with $D = A\pm B$ signifying $D\in [A-B, A+B]$. As mentioned, we also use ``Morris($a$)'' to refer to the Morris Counter parameterized to increment $X$ with probability $1/(1+a)^X$.

\section{Improved upper bound for approximate counting}\label{sec:alg}
In \cref{sec:new-algo} we describe and analyze our new algorithm for approximate counting with space complexity $O(\log\log N + \log(1/\eps) + \log\log(1/\delta))$. We then show that this upper bound is achieved by the original Morris Counter itself in \cref{sec:morris}.

\subsection{New algorithm description and analysis}\label{sec:new-algo}

\begin{algorithm}[!h] 
  \caption{Approximate counting algorithm.} \label{algo:main}
  \begin{algorithmic}[1]
    \Procedure{$\mathsf{ApproxCount}$}{$\eps,\delta$}
    \State \textbf{\underline{Init():}}
    \State $\eta \leftarrow \delta, X_0 \leftarrow \lceil\ln_{1+\eps}(C\ln(1/\eta)/\eps^3)\rceil$
    \State $Y\leftarrow 0, X\leftarrow X_0 , \alpha\leftarrow 1, T\leftarrow \lceil (1+\eps)^X\rceil$
    \Statex
    \State \textbf{\underline{Increment():}}
    \State with probability $\alpha$, update $Y\leftarrow Y + 1$
    \If {$Y > \alpha T$}
      \State $X\leftarrow X+1$
      \State $T \leftarrow \lceil (1+\eps)^X\rceil, \eta \leftarrow \frac{\delta}{X^2}$
      \State $\alpha_{\text{new}} \leftarrow \frac{C\ln(1/\eta)}{\eps^3 T}$
      \State $Y\leftarrow \lfloor Y\cdot \alpha_{\text{new}}/\alpha \rfloor$
      \State $\alpha\leftarrow \alpha_{\text{new}}$
    \EndIf      
    \Statex
    \State \textbf{\underline{Query():}}
    \If {$X=X_0$}
      \State \Return $Y$
    \Else
      \State \Return $T$
    \EndIf
    \EndProcedure
  \end{algorithmic}
\end{algorithm}

We describe our full approximate counting algorithm in \cref{algo:main}. The counter is initialized via the \textbf{Init()} procedure, and each increment to $N$ and query for an estimate of $N$ are described in the pseudocode, following the ideas set forth in \cref{sec:overview}. \cref{thm:correctness} shows that the relative error of the output of \cref{algo:main} is $1+O(\eps)$ with probability $1-O(\delta)$. \cref{eqn:approx-thm} follows by adjusting $\eps,\delta$ by a constant factor. Our variable $X$ is quite similar to that of the Morris Counter: it represents (an approximation to) $\log_{1+\eps} N$. The main difference is that whereas the Morris Counter decides to increment $X$ based on flipping a number of coins depending on $X$ itself, we use an auxiliary counter $Y$ to guide when $X$ should be incremented. 

First we define some notation that will be useful for the proof. We divide the algorithm's execution into epochs $k=0,1,2,\ldots,$ corresponding to the value of $X-X_0$. We mark the end of an epoch immediately before line 8 is about to execute, and the beginning of the new epoch immediately after line 13 has completed executing. During a given epoch, we let $T_k, \alpha_k, \eta_k$ be the corresponding values of $T,\alpha, \eta$ set in lines 7--12 of \cref{algo:main}. For example, $T_0 = 1, \alpha_0 = 1, \eta_0 = \delta$. We also define $Y_k$ to be the value of $Y$ when epoch $k$ begins, so that $Y_0 = 0$ and $Y_k$ for $k>0$ is set in line 11 of \cref{algo:main}. To be precise, a particular epoch is said to begin after \cref{algo:main} completes lines 4 or 12, and it ends at line 6 when the if statement triggers. We say that $N$ becomes a certain value once \textbf{Increment()} has been called that number of times, {\it and} the most recent call completed.

\begin{theorem}\label{thm:correctness}
  There is a universal constant $C'>0$ such that $\forall \eps,\delta\in(0,1/2)$, the output $\hat N$ of \textbf{Query()} in \cref{algo:main} satisfies  $\Pr(|\hat N - N| > C'\eps N) < C'\delta$.
\end{theorem}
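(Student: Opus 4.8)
\medskip

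The plan is to track, epoch by epoch, how accurately the variable $T_k = \lceil(1+\eps)^{X_0+k}\rceil$ approximates the true count $N$ at the moment epoch $k$ ends, and to argue by induction on $k$ that conditioned on not having failed yet, the auxiliary counter $Y$ is a faithful scaled-down version of the number of increments received during the current epoch. The base case is epoch $0$: here $\alpha_0 = 1$, so $Y$ exactly equals $N$ as long as $N \le \alpha_0 T_0 = T_0 = \lceil(1+\eps)^{X_0}\rceil = \Theta(\ln(1/\delta)/\eps^3)$; in this regime \textbf{Query()} returns $Y = N$ exactly, so there is nothing to prove. The interesting case is when $N$ is large enough that we leave epoch $0$.

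\medskip

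For the inductive step, fix an epoch $k \ge 1$. Let $n_k$ denote the true value of $N$ at the moment epoch $k$ begins (equivalently, the number of increments consumed before entering epoch $k$). By the induction hypothesis we will have $n_k = (1\pm O(\eps))\,T_{k-1} \cdot \big(\text{something close to }1\big)$, so it suffices to show two things: (i) the epoch is not exited until $N$ has grown to roughly $T_k = (1+\eps) T_{k-1}$, with relative error $O(\eps)$, and (ii) the value $Y_k = \lfloor Y \cdot \alpha_k/\alpha_{k-1}\rfloor$ handed off at line 11 correctly represents $\alpha_k$ times the ``overshoot'' — i.e.\ how far into epoch $k$ the count already is — up to an additive $O(1)$ rounding error. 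For (i), during epoch $k$ each increment bumps $Y$ with probability $\alpha_k$, and the epoch ends once $Y$ exceeds $\alpha_k T_k$; since $\alpha_k T_k = C\ln(1/\eta_k)/\eps^3 = \Omega(\ln(1/\eta_k)/\eps^3)$, a Chernoff bound gives that the number of increments needed to push $Y$ past $\alpha_k T_k$ is $(1 \pm \eps/10)\,T_k$ except with probability $\eta_k$. This is exactly the decision-problem guarantee from the overview: with threshold $T_k$ and confidence parameter $\eta_k = \delta/X^2$, the test $Y > \alpha_k T_k$ correctly distinguishes $N < (1-\eps/10)T_k$ from $N > (1+\eps/10)T_k$. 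Summing $\eta_k = \delta/(X_0+k)^2$ over all epochs gives total failure probability $\sum_k \delta/(X_0+k)^2 = O(\delta)$, which accounts for the $C\delta$ in the statement. For (ii), note $\alpha_k/\alpha_{k-1} = T_{k-1}\ln(1/\eta_k)/(T_k\ln(1/\eta_{k-1})) \approx 1/(1+\eps)$ up to the slowly-varying log factor, and the $\eps^3$ (rather than $\eps^2$) scaling of $\alpha$ is what makes $\alpha_k T_k$ large enough — namely $\Omega(1/\eps)$ times the Chernoff-required $\Omega(\ln(1/\eta_k)/\eps^2)$ — that a single $\lfloor\cdot\rfloor$ rounding error of $1$ in $Y_k$ only distorts the implied count by a $(1\pm O(\eps))$ factor. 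Chaining these per-epoch $(1\pm O(\eps))$ distortions multiplicatively across the at most $O(\log_{1+\eps} N) = O(\eps^{-1}\log N)$ epochs would be fatal, so the argument must instead be set up so that the errors do \emph{not} compound: the key point is that at the end of epoch $k$ the count $N$ is pinned to $(1\pm\eps/10)T_k$ \emph{directly} by the Chernoff bound of that epoch alone, independently of what happened earlier, and $T_k$ is a deterministic function of $k$. The handoff value $Y_k$ only needs to be accurate enough that the \emph{next} epoch's Chernoff bound still applies, and for that an $O(1)$ additive slack absorbed into the $O(\eps)$ budget suffices.

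\medskip

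Finally, \textbf{Query()} returns $T = T_k$ whenever $X > X_0$, and we have just argued $N = (1\pm\eps/10)T_k$ on the event (of probability $\ge 1 - C\delta$) that no epoch's decision test failed; replacing $\eps$ by the appropriate constant multiple gives $\Pr(|\hat N - N| > C\eps N) < C\delta$ as claimed. The main obstacle I anticipate is item (ii): carefully bookkeeping the rounding at line 11 and the non-integrality of $\alpha_k T_k$ so that the invariant fed into the next epoch's Chernoff bound is clean — in particular verifying that ``$Y$ at the start of epoch $k$ equals $\alpha_k \cdot (\text{increments so far into epoch }k) \pm O(1)$'' is preserved — and confirming that the $\eps^2 \to \eps^3$ change in the definition of $\alpha$ is exactly what is needed to keep that $O(1)$ slack negligible. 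The Chernoff/union-bound parts (item (i) and the $\sum \eta_k = O(\delta)$ accounting) are routine.
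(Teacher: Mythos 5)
Your high-level skeleton matches the paper's: a per-epoch concentration bound at confidence $\eta_k=\delta/(X_0+k)^2$, a union bound giving $\sum_k\eta_k=O(\delta)$, and the observation that epoch $0$ is exact. But the step where you resolve how per-epoch errors combine --- which is the heart of the proof --- is wrong. You assert that ``at the end of epoch $k$ the count $N$ is pinned to $(1\pm\eps/10)T_k$ directly by the Chernoff bound of that epoch alone, independently of what happened earlier,'' so that errors ``do not compound.'' This cannot be true of \cref{algo:main}: the value handed off at line 11 when epoch $k$ begins is the \emph{deterministic} quantity $Y_k=\lfloor(\lfloor\alpha_{k-1}T_{k-1}\rfloor+1)\cdot\alpha_k/\alpha_{k-1}\rfloor=\alpha_kT_{k-1}\pm O(1)$, which carries no information about the true count accumulated so far, and the randomness of epoch $k$ only controls the number of increments occurring \emph{during} epoch $k$ --- a quantity of size about $T_k-T_{k-1}\approx\eps T_{k-1}$, not $(1\pm\eps/10)T_k$. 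Consequently no single epoch's Chernoff bound can control $|N-T_k|$; the total count at the end of epoch $k$ is the sum of the epoch durations, and the per-epoch errors genuinely do accumulate. The paper's resolution is not that they fail to compound, but that each duration error is second order, $\pm\eps^2T_{k-1}$, and the thresholds grow geometrically, so $\sum_{r\le k}(T_r-T_{r-1}\pm\eps^2T_{r-1})\subseteq(1\pm1.5\eps)T_k$, i.e.\ the additive errors telescope to $O(\eps T_k)$ overall. Your item (i) as stated (``the number of increments needed to push $Y$ past $\alpha_kT_k$ is $(1\pm\eps/10)T_k$'') conflates the epoch duration with the cumulative count and would need to be replaced by exactly this ``duration is $T_k-T_{k-1}\pm\eps^2T_{k-1}$'' event.

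Relatedly, your explanation of the $\eps^2\to\eps^3$ change (absorbing the $\lfloor\cdot\rfloor$ rounding in $Y_k$) misses its actual purpose, which is tied to the same accounting. The number of $Y$-increments in epoch $k$ is $\approx\eps\alpha_kT_{k-1}=\Theta(\ln(1/\eta_k)/\eps^2)$, which is precisely what the Chernoff bound needs to detect a \emph{relative} deviation of $\eps$ in that count, i.e.\ an additive deviation of $\eps^2T_{k-1}$ in the epoch duration. With $\eps^2$ in place of $\eps^3$ the per-epoch duration error would only be controlled at the scale $\eps^{3/2}T_{k-1}$, and the geometric sum would give relative error $\Theta(\sqrt\eps)$ rather than $O(\eps)$. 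The $O(1)$ rounding slack is a minor term that the paper simply carries along as ``$\pm O(1)$''; it is not what forces the extra factor of $1/\eps$.
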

\begin{proof}

  We first note that while remaining in epoch $0$, i.e.\ as long as $1\le N\le T_0$, $Y$ stores $N$ exactly and thus our output is exactly correct. Our focus is thus on the case of larger $N$.

  For $k\ge 0$, define the event $\mathcal E_k$ that once we enter epoch $k$, the number of increments to $N$ before we advance to the next epoch is $T_k - T_{k-1} \pm \eps^2T_{k-1}$ (where we use the convention $T_{-1} = 0$). We henceforth condition on the event $\wedge_{k\geq 0} \mathcal E_k$. 
  Since the $T_r$ are in geometric series with base $1+\eps$ (up to $\pm 1$ due to rounding), we have $\sum_{r=0}^k (T_r - T_{r-1} \pm \eps^2 T_{r-1}) \subseteq (1\pm 1.5\eps)T_k$, i.e., only after $(1\pm1.5\eps)T_k$ increments to $N$, could the algorithm possibly be in epoch $k$.
  Thus, if $k^*$ is the final epoch when \textbf{Query()} is called, we have $\hat N=T_{k^*}$ and $N=(1\pm1.5\eps)T_{k^*}$.
  That is, $\hat N=\frac{1}{1\pm 1.5\eps} N$, which implies $|\hat N-N|\leq C\eps N$ when $\eps<1/2$.

  We finally bound
  $$
  \Pr\left(\bigwedge_{k=0}^{\infty} \mathcal E_k\right) = 1 - \Pr\left(\bigvee_{k=0}^{\infty} \neg\mathcal E_k\right) \ge 1 - \sum_{k=0}^{\infty} \Pr(\neg\mathcal E_k) .
  $$
  $\Pr(\neg \mathcal E_0) = 0$, so we focus on $k\ge 1$. Note $Y_k = \lfloor (\lfloor \alpha_{k-1}  T_{k-1}\rfloor +1) \cdot (\alpha_k/ \alpha_{k-1})\rfloor$, which is $\alpha_k T_{k-1}\pm O(1)$. The new threshold for $Y$ to enter epoch $k+1$ is $\lfloor \alpha_k T_k\rfloor + 1$, which thus requires $\alpha_k (T_k-T_{k-1})\pm O(1)$ more increments to $Y$, which is
  \begin{equation}
    \eps \alpha_k T_{k-1}\pm O(1),\label{eqn:num-increments}
  \end{equation}
  since $T_k-T_{k-1} = \eps T_{k-1}\pm O(1)$ and $\alpha\leq 1$. 
  To upper bound the probability that we already advance to the next epoch after calling \textbf{Increment()} $t_1 := T_k-T_{k-1}-\eps^2 T_{k-1}$ times, it suffices to consider the following question: If we increment $Y$ with probability $\alpha_k$ independently for each of the $t_1$ \textbf{Increment()} calls, what is the probability that we increment $Y$ \emph{at least} $\varepsilon\alpha_kT_{k-1}-O(1)$ times.\footnote{Note that in the actual execution of the algorithm, not all $t_1$ calls increment $Y$ with probability $\alpha_k$, e.g., if we have advanced to the next epoch already, then the probability becomes $\alpha_{k+1}$. Nevertheless, the probability that we advance to the next epoch after $t_1$ \textbf{Increment()} calls is the same if we increment $Y$ with $\alpha_k$ probability for each call, since it does not matter if we have \emph{already} advanced to the next epoch.}

  The expected number of times $Y$ is incremented is $$\alpha_k t_1=\eps\alpha_k T_{k-1} - \eps^2\alpha_k T_{k-1} \pm O(1),$$ which is $\Theta(\ln(1/\eta_k)/\eps^2)$. Advancing to the next epoch thus implies deviating from the expectation by more than $\eps^2\alpha_k T_{k-1}\pm O(1)$, i.e., $\eps$ times the expectation. The Chernoff bound implies that the probability of this occurring is at most $\eta_k$. A similar calculation shows that the probability that we have {\it not} advanced to the next epoch after calling \textbf{Increment()} $t_2 := T_{k} - T_{k-1} + \eps^2 T_{k-1}$ times. Thus $\Pr(\neg \mathcal E_k) \le 2\eta_k$. Thus $\Pr(\vee_{k\geq 0} \neg\mathcal E_k) \le 2\sum_k \eta_k = 2\sum_k \delta/(k+1)^2 = O(\delta)$.
\end{proof}

\begin{remark}\label{rem:implementation}
  Before we give the space analysis, the astute reader may notice that $T$ itself is ideally approximately $N$ and thus should require $\Theta(\log N)$ bits to store. A similar statement could be made about the Morris Counter: the output is ultimately given as $a^{-1}((1+a)^X - 1)$ (see \cref{sec:overview}), which is also $\Theta(\log N)$ bits. The key is that in implementation, we never actually store $T$: we only store $X$. Then our answer to a query is only to return $X$, which will be an additive $O(1)$ approximation to $\log_{1+\eps} N$ with high probability, which is enough for the querying party to specify an approximation to $N$. Similarly, $\delta$ is never stored or even given to the algorithm, but rather the input should be $\Delta$ such that $\delta = 2^{-\Delta}$, and only $\Delta$ is ever stored. Also, the correctness analysis only requires that $\alpha$ be {\it at least} the value in line 10 and not exactly that (to apply the Chernoff bound effectively). Thus $\alpha$ can be rounded up to the nearest inverse power of $2$ so that $\alpha = 2^{-t}$ and only $t$ need be stored consuming only $\log t = \log\log(1/\alpha)$ bits. We can then generate a $\mathsf{Bernoulli}(\alpha)$ random variable (line 6) by flipping a fair coin $t$ times and returning $1$ iff all flips were heads; this takes $1$ bit to keep track of the \textsf{AND} and $\log t$ bits to keep track of the number of flips made so far. $\eta$ also need not be stored explicitly since its value is implicit from other stored values (namely $X$, $\eps$, and $\Delta$).

  Of course the situation is even simpler in models of computation other than word RAM, such as a finite automaton or branching program:  then program constants need not be stored in memory (they only affect the transitions), and only the variables $X, Y$ contain program state that needs to be stored. Furthermore, what is most important from the perspective of the practical motivation in \cref{sec:intro} when running a system storing many approximate counters is the number of bits required to maintain program state; it is reasonable to assume in practical applications that $O(\log N)$ bit registers are available to be used temporarily while processing updates and queries, which could lead to faster and simpler implementation.
  \end{remark}

\begin{theorem}\label{thm:space-analysis}
    For any $\eps,\delta\in(0,1/2)$, the probability that \cref{algo:main} needs more than
    \[
        \log\log N+\log\log(1/\delta)+3\log(1/\eps)+\Omega(t)
    \]
    bits of memory after $N$ increments is at most $(\eps/N)^{2^{t}}$, for any $t\geq C\cdot(\log\log\log N+\log\log(1/\eps))$, where $C$ is a sufficiently large constant.
\end{theorem}
To see that this theorem implies the space bound stated in Theorem~\ref{thm:main}, for any $S>C(\log\log N+\log (1/\varepsilon)+\log\log(1/\delta))$ for a sufficiently large $C$, we have $t>(C-3)(\log\log N+\log (1/\varepsilon)+\log\log(1/\delta))>S/2$.
Hence, the probability that we use more than $S$ bits of memory after $N$ increments is at most
\[
    (\varepsilon/N)^{2^t}\leq 2^{-2^t}\leq \exp(-C'\exp(C''S)),
\]
for some constants $C',C''>0$.

\begin{proof}
  As described in \cref{rem:implementation}, \cref{algo:main} only explicitly stores two variables $X$ and $Y$.
  When $X=X_0$, $Y$ is between $0$ and $T=O(\log(1/\delta)/\eps^3)$.
  In this case, storing $Y$ takes $$\log\log(1/\delta)+3\log(1/\eps)+O(1)$$ bits.
  When $X=X_0+k$ for $k\geq 1$ (i.e., in epoch $k$), $Y$ is between $\alpha_kT_{k-1}-O(1)$ and $\alpha_kT_k+O(1)$.
  In this case, storing $Y$ takes
  \begin{align*}
    \log(\alpha_k(T_k-T_{k-1})+O(1))&\leq \log\log(1/\eta)+2\log(1/\eps)+O(1) \\
    &\leq \log\log(1/\delta)+2\log(1/\eps)+2\log\log X+O(1)
  \end{align*}
  bits.
  Thus, provided that $X\leq X_{\max}$, Algorithm 1 uses at most
  \begin{equation}\label{eqn_space_given_xmax}
    \max\{\log X_{\max}, \log(1/\eps)\}+\log\log(1/\delta)+2\log(1/\eps)+2\log\log X_{\max}+O(1)
  \end{equation}
  bits.
  In the following, we show that the final $X$ is small with high probability.

  We will show that once we reach an epoch $k$ for $k$ large (corresponding to $X=X_0+k$), with high probability we will never advance to epoch $k+1$. Indeed, the probability that we do advance is the probability that $Y$ increments at least $\eps \alpha_k T_{k-1}\pm O(1)$ times over the at most $N$ remaining calls to \textbf{Increment()} (see \cref{eqn:num-increments}). By a union bound over all $(\eps\alpha_kT_{k-1}+O(1))$-subsets of the remaining increments, the probability that this occurs is at most
  \begin{align*}
    \binom N{\eps \alpha_k T_{k-1}\pm O(1)} \cdot \alpha_k^{\eps \alpha_k T_{k-1}\pm O(1)} &\le \left(\frac{2e N}{\eps T_{k-1}}\right)^{\eps \alpha_k T_{k-1}\pm O(1)}\\
    {}&\le \left(\frac{C'N}{\eps (1+\eps)^{X}}\right)^{\Theta(\log(X^2/\delta)/\eps^2)}.
  \end{align*}

  For $X\geq 2\log_{1+\eps} (N/\eps)$, it is at most
  \begin{align*}
    \left(\frac{C'N}{ \eps(1+\eps)^{X}}\right)^{\Theta(\log(X^2/\delta)/\eps^2)}  &\le \left(\frac 1{(1+\eps)^X}\right)^{\Omega(1/\eps^2)} \\
    {}&\le (e^{-\Theta(\eps X)})^{\Omega(1/\eps^2)}\\
    {}&\le e^{-\Omega(X)}.
    \end{align*}
    
    By setting $X_{\max}=\Theta(2^{t}\log_{1+\eps}(N/\eps))$ for some integer $t\geq C\cdot(\log\log\log N+\log\log(1/\eps))$, i.e., $t\geq \log\log X_{\max}+\log\log(1/\eps)$,
    \begin{align*}
        \log X_{\max}&\leq\log\log N+\log(1/\eps)+\log\log(1/\eps)+t+O(1) \\
        &= \log\log N+\log(1/\eps)+\Theta(t).
    \end{align*}
    By Equation~\eqref{eqn_space_given_xmax}, the probability that \cref{algo:main} needs more than
    \[
        \log\log N+\log\log(1/\delta)+3\log(1/\eps)+\Omega(t)
    \]
    bits of space is at most
    \[
        \left(\frac{\eps}{N}\right)^{2^{t}}.
    \]
  \end{proof}
  
  \begin{remark}
    In the proof, we assumed that the algorithm allocates exactly $\log X_{\max}$ bits to store $X$, and then we bounded the probability that $X$ exceeds $X_{\max}$ after $N$ increments.
    This assumption requires us to have an upper bound on $N$ in advance.
    In general, when an upper bound on $N$ is unknown, we will have to store variable $X$ that is also unbounded, and dynamically allocate bits to the counter.
    This can be done by first encoding $\lceil \log X\rceil$ using $O(\log\log X)$ bits, then encoding $X$ using $\lceil \log X\rceil$ bits.
    Our proof gives the same space bound in this case.
  \end{remark}

  \begin{remark}\label{rem:better-space}
The source of the constant factor ``3'' multiplying $\log(1/\eps)$ in the space complexity is due to the cubic dependence of $\alpha_{\text{new}}$ on $1/\eps$ in \cref{algo:main}. This cubic dependence was due to the proof structure of \cref{thm:correctness}: we conditioned on the events $\mathcal E_k$ that we spent a concentrated amount of time in each epoch. To show that this happens with high probability, we performed a union bound over all epochs. We feel this structure makes the proof more intuitive, though it comes at the cost of a worsened constant factor. One can show that the algorithm is still in fact correct with $\alpha_{\text{new}}$ depending only quadratically on $1/\eps$ by proving concentration only on the total time spent on all the epochs combined, as opposed to union bounding over epochs separately, by using an argument similar to what we will see shortly in \cref{sec:morris}. One can also see empirically via implementation that the algorithm of this section and Morris+ behave nearly identically, including the constant factor (see \cref{sec:implementation}).
  \end{remark}

\begin{remark}
  Our approximate counter is fully mergeable \cite{AgarwalCHPWY13}.
  That is, given two counters $(X_1,Y_1)$ and $(X_2,Y_2)$, which approximate two (unknown) numbers $N_1$ and $N_2$ respectively, they can be merged into a single data structure $(X, Y)$ that follows the same distribution as if it was incremented exactly $N_1+N_2$ times so that nothing is lost in the parameters $\eps$ and $\delta$ (the Morris Counter enjoys this same benefit \cite[Section 2.1]{CormodeY20}).
  To see this, observe that each epoch of our algorithm uses sampling, and the sampling rate is non-increasing.
  Assuming $X_1\leq X_2$, we can simulate $N_1$ extra increments to the second counter by another subsampling with the correct probabilities.
  More specifically, the first counter is in epoch $k_1=X_1-X_0$, and we know the sampling probabilities $\alpha_0,\ldots,\alpha_{k_1}$, and the exact number of increments that survived the sampling (caused $Y_1$ to increment) in each epoch.
  We are going to insert all the survivors to the second counter, which currently have sampling probability $\alpha_{k_2}$ for $k_2=X_2-X_0$.
  For each survivor in epoch $i$ (for $0\leq i\leq k_1$), we increment $Y_2$ with probability $\alpha_{k_2}/\alpha_i$.
  Then effectively, we increment $Y_2$ with probability $\alpha_{k_2}$ for each of the \emph{original} $N_1$ increments.
  Whenever $Y_2$ reaches the threshold $\alpha T$, we increment $X_2$, update $Y_2$, and adjust the probabilities.
  Hence, the final $(X_2, Y_2)$ has the same distribution as if it was incremented a total of $N_1+N_2$ times.
\end{remark}

\subsection{Morris Counter improved analysis}\label{sec:morris}
Here we analyze the Morris($a$) algorithm for some $a\in (0, 1)$, in which $X$ is incremented with probability $(1+a)^{-X}$ and we output $\hat N = ((1+a)^X-1)/a$. 
When the total number of increments $N$ is at most $8/a$, the value of the counter can be explicitly maintained in addition to the Morris Counter, which costs at most $\log (1/a)+O(1)$ bits of space.
In the following, we assume $N$ is at least $8/a$; this is not a serious limitation since we can maintain a separate counter exactly, deterministically up until this value (the ``Morris+'' modification described in \cref{sec:intro}).

Let us consider Morris($a$) on an infinite sequence of increments.
For any $i\geq 0$, $X$ exceeds $i$ with probability $1$.
Let $Z_i\geq 1$ be the random variable denoting the number of increments it takes for $X$ to increase from $i$ to $i+1$.
Since when $X=i$, each increment causes $X$ to increase with probability $p_i=(1+a)^{-i}$, $Z_i$ follows the geometric distribution
\[
  \Pr[Z_i=l]=(1-p_i)^{l-1}p_i.
\]
Therefore, we have
\[
  \E[Z_i]=1/p_i=(1+a)^i,
\]
and
\[
  \E[e^{tZ_i}]=\sum_{l\geq 1}e^{tl}(1-p_i)^{l-1}p_i=\frac{e^tp_i}{1-e^t(1-p_i)},
\]
for any $t$ such that $e^t(1-p_i)<1$.

Next, let $\eps<1/2$, we bound 
\begin{equation}\label{eqn_upper}
  \Pr\left[\sum_{i=0}^kZ_i\geq (1+\eps)\sum_{i=0}^k 1/p_i\right].
\end{equation}
Following the proof of Chernoff bound, for $t$ such that $e^t(1-p_k)<1$, we have
\begin{align*}
  \E\left[e^{t\sum_{i=0}^kZ_i}\right]&=\prod_{i=0}^k\E\left[e^{tZ_i}\right]\\
  &=\frac{e^{(k+1)t}\prod_{i=0}^kp_i}{\prod_{i=0}^k(1-e^t(1-p_i))} \\
  &=\frac{e^{(k+1)t}(1+a)^{-k(k+1)/2}}{\prod_{i=0}^k(1-e^t(1-p_i))}.
\end{align*}
By Markov's inequality,
\begin{align*}
  \eqref{eqn_upper}&\leq \frac{\E\left[e^{t\sum_{i=0}^kZ_i}\right]}{e^{t(1+\eps)\sum_{i=0}^k1/p_i}} \\
  &=\frac{\E\left[e^{t\sum_{i=0}^kZ_i}\right]}{e^{t(1+\eps)((1+a)^{k+1}-1)/a} } \\
  &=\frac{e^{(k+1)t}(1+a)^{-k(k+1)/2}}{e^{t(1+\eps)((1+a)^{k+1}-1)/a} \prod_{i=0}^k(1-e^t(1-(1+a)^{-i}))}.
\end{align*}
Now set $t=\ln\left(\frac{1}{1-\frac12\eps(1+a)^{-k}}\right)$, which satisfies $e^t(1-p_k)<1$, we have
\begin{align*}
  \eqref{eqn_upper}&\leq (1+a)^{-k(k+1)/2}\cdot \left(1-\frac12\eps(1+a)^{-k}\right)^{-(k+1)+(1+\eps)((1+a)^{k+1}-1)/a}\cdot\prod_{i=0}^k\frac{1}{1-\frac{1-(1+a)^{-i}}{1-\frac12\eps(1+a)^{-k}}} \\
  &= (1+a)^{-k(k+1)/2}\cdot \left(1-\frac12\eps(1+a)^{-k}\right)^{-(k+1)+(1+\eps)((1+a)^{k+1}-1)/a}\cdot\prod_{i=0}^k\frac{1-\frac12\eps(1+a)^{-k}}{(1+a)^{-i}-\frac12\eps(1+a)^{-k}} \\
  &= (1+a)^{-k(k+1)/2}\cdot \left(1-\frac12\eps(1+a)^{-k}\right)^{(1+\eps)((1+a)^{k+1}-1)/a}\cdot\prod_{i=0}^k\frac{1}{(1+a)^{-i}(1-\frac12\eps(1+a)^{-k+i})} \\
  &\leq e^{-\frac12\eps(1+a)^{-k}(1+\eps)((1+a)^{k+1}-1)/a}\cdot\prod_{i=0}^k\frac{1}{1-\frac12\eps(1+a)^{-k+i}}.
\end{align*}
By the fact that $1/(1-z)\leq e^{z+z^2}$ for all $0<z<1/2$, 
\begin{align*}
    \eqref{eqn_upper}&\leq e^{-\frac12\eps(1+a)^{-k}(1+\eps)((1+a)^{k+1}-1)/a}\cdot e^{\sum_{i=0}^k(\frac12\eps(1+a)^{-k+i}+\frac14\eps^2(1+a)^{-2k+2i})} \\
    &= e^{-\frac12\eps(1+a)^{-k}\left((1+\eps)((1+a)^{k+1}-1)/a-\sum_{i=0}^k((1+a)^{i}+\frac12\eps(1+a)^{-k+2i})\right)} \\
    &\leq e^{-\frac12\eps(1+a)^{-k}\left((1+\eps)((1+a)^{k+1}-1)/a-(1+\frac12\eps)((1+a)^{k+1}-1)/a)\right)} \\
    &=e^{-\frac14\eps^2(1+a)^{-k}((1+a)^{k+1}-1)/a}.
\end{align*}
For $k>\frac{1}{a}$, we have $\eqref{eqn_upper}\leq e^{-\eps^2/8a}$.

Similarly, we next bound
\begin{equation}\label{eqn_lower}
  \Pr\left[\sum_{i=0}^kZ_i\leq (1-\eps)\sum_{i=0}^k 1/p_i\right].
\end{equation}
By Markov's inequality,
\begin{align*}
  \eqref{eqn_lower}&=\Pr\left[e^{-t\sum_{i=0}^kZ_i}\geq e^{-t(1-\eps)\sum_{i=0}^k 1/p_i}\right] \\
  &\leq \frac{\E\left[e^{-t\sum_{i=0}^kZ_i}\right]}{e^{-t(1-\eps)\sum_{i=0}^k 1/p_i}} \\
  &=\frac{e^{-t(k+1)}(1+a)^{-k(k+1)/2}}{e^{-t(1-\eps)((1+a)^{k+1}-1)/a}\prod_{i=0}^k(1-e^{-t}(1-p_i))}.
\end{align*}
Now set $t=\ln(1+\frac12 \eps(1+a)^{-k})$, we have
\begin{align*}
  \eqref{eqn_lower}&\leq (1+a)^{-k(k+1)/2}\cdot \left(1+\frac12 \eps(1+a)^{-k}\right)^{-(k+1)+(1-\eps)((1+a)^{k+1}-1)/a}\cdot \frac{1}{\prod_{i=0}^k\left(1-\frac{1-(1+a)^{-i}}{1+\frac12 \eps(1+a)^{-k}}\right)} \\
  &=(1+a)^{-k(k+1)/2}\cdot \left(1+\frac12 \eps(1+a)^{-k}\right)^{-(k+1)+(1-\eps)((1+a)^{k+1}-1)/a}\cdot \frac{1}{\prod_{i=0}^k\left(\frac{(1+a)^{-i}+\frac12 \eps(1+a)^{-k}}{1+\frac12 \eps(1+a)^{-k}}\right)} \\
  &=(1+a)^{-k(k+1)/2}\cdot \left(1+\frac12 \eps(1+a)^{-k}\right)^{(1-\eps)((1+a)^{k+1}-1)/a}\cdot \frac{1}{\prod_{i=0}^k(1+a)^{-i}\left(1+\frac12 \eps(1+a)^{-k+i}\right)} \\
  &=\left(1+\frac12 \eps(1+a)^{-k}\right)^{(1-\eps)((1+a)^{k+1}-1)/a}\cdot \frac{1}{\prod_{i=0}^k\left(1+\frac12 \eps(1+a)^{-k+i}\right)}.
\end{align*}
By the fact that $1/(1+z)\leq e^{-z+z^2}$ for $z\geq 0$, we have
\begin{align*}
  \eqref{eqn_lower}&\leq e^{\frac12 \eps(1+a)^{-k}(1-\eps)((1+a)^{k+1}-1)/a}\cdot e^{\prod_{i=0}^k\left(-\frac12 \eps(1+a)^{-k+i}+\frac14 \eps^2(1+a)^{-2k+2i}\right)}\\
  &=e^{\frac12 \eps(1+a)^{-k}\left((1-\eps)((1+a)^{k+1}-1)/a+\prod_{i=0}^k\left(-(1+a)^{i}+\frac12 \eps(1+a)^{-k+2i}\right)\right)} \\
  &\leq e^{-\frac14 \eps^2(1+a)^{-k}((1+a)^{k+1}-1)/a}.
\end{align*}
When $k>\frac1a$, this is at most $e^{-\eps^2/8a}$.

Therefore, for any $k>1/a$, with probability at least $1-e^{-\eps^2/8a}$, we have
\[
  \left|\sum_{i=0}^kZ_i-((1+a)^{k+1}-1)/a\right|\leq \eps((1+a)^{k+1}-1)/a.
\]
Now fix any $N>8/a$, let $k_1$ be the largest $k$ such that $(1+\eps)((1+a)^{k+1}-1)/a<N$, $k_2$ be the smallest $k$ such that $(1-\eps)((1+a)^{k+1}-1)/a\geq N$.
We have $k_1,k_2>1/a$, then we apply the above inequality to $k_1$ and $k_2$, and by union bound, with probability at least $1-2e^{-\eps^2/8a}$, we have both
\[
  \sum_{i=0}^{k_1}Z_i\leq (1+\eps)((1+a)^{k_1+1}-1)/a<N,
\]
i.e., $X>k_1$ after $N$ increments, and
\[
  \sum_{i=0}^{k_2}Z_i\geq (1-\eps)((1+a)^{k_2+1}-1)/a\geq N,
\]
i.e., $X\leq k_2$ after $N$ increments.
Therefore, $((1+a)^X-1)/a$ is a $(1\pm 2\eps)$ approximation of $N$ with probability $1-2e^{-\eps^2/8a}$.

By setting $a=\eps^2/(8\ln(1/\delta))$, the space usage of Morris($a$) is $\log\log N+\log (1/a)+O(1)=\log\log N+2\log(1/\eps)+\log\log(1/\delta)+O(1)$ bits with high probability, and outputs a $(1\pm 2\eps)$ approximation with probability $1-2/\delta$.
By reparametrizing, we prove Theorem~\ref{thm:morris}.

\begin{remark}
While it may be possible to improve the constant factor ``8'' in the exponent of the tail bound above, note that this constant in turn only affects the setting of $a$ by a constant factor, and the space complexity of Morris($a$) only depends logarithmically on $1/a$. Thus, any improvement to the factor $8$ can only improve the analysis of the space complexity by an additive constant.
  \end{remark}

\begin{remark}\label{rem:eric}
After seeing our proof, Eric Price pointed out that it can be made even more succinct as follows: one can show that geometric random variables are ``subgamma'', so that a sum of geometric random variables (as in Eqs.~\eqref{eqn_upper} and \eqref{eqn_lower}) is subgamma with appropriate parameters (see \cite[Section 2.4]{BoucheronLM13} for the definition and relevant properties of subgamma random variables).
\end{remark}

\section{Space lower bound}\label{sec_lb}
\newcommand{\cC}{\mathcal{C}}
\newcommand{\dett}{\mathrm{det}}
Here we prove the matching lower bound for approximate counters. Our lower bound states that even if the algorithm's memory usage is a random variable which only has a {\it small} chance of being small (i.e.\ we allow it to use arbitrarily large memory with  large probability $1- \sqrt \delta$), it still cannot satisfy \cref{eqn:approx-thm}.
\begin{theorem}
Fix $\eps,\delta\in(0,1/2)$ and integer $n$.
Let $\cC$ be an approximate counter which outputs $\hat N$ satisfying 
\[
	\Pr(|N-\hat N|>\eps N)<\delta,
\]
for all $N\in \{1,\ldots, n\}$, and uses no more than $S$ bits of space with probability at least $\sqrt{\delta}$.
We must have 
\begin{align*}
    S&\geq \min\{\log n-O(1),\max\{\log\log n+\log(1/\eps)-O(\log\log(1/\eps)), \log\log (1/\delta)-O(\log\log\log(1/\delta))\},
\end{align*}
which is at least $\Omega(\min\{\log n,\log\log n+\log(1/\eps)+\log\log (1/\delta)\})$.
\end{theorem}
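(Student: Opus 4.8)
The plan is to prove two lower bounds on $S$ separately and combine them through the $\min$. Write $m:=2^S$, and discard at the outset the parameter ranges in which the claimed bound is $O(1)$ (i.e.\ $\eps$ or $\delta$ above a fixed constant, or $n$ below a fixed power of $1/\eps$), so that $\eps,\delta,1/n$ may be assumed small. First I pass to a low-space regime: let $\mathcal G$ be the event that $\mathcal C$ never uses more than $S$ bits on the stream of $n$ increments, so $\Pr[\mathcal G]\ge\sqrt\delta$; since the space footprint is monotone in the stream length the same event serves for every $N\le n$, and for each fixed $N$ we get $\Pr[\,\mathcal C\text{ errs at }N\mid\mathcal G\,]<\delta/\sqrt\delta=\sqrt\delta$. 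Conditioning $\mathcal C$'s randomness on $\mathcal G$ thus yields an object whose state after $t$ increments, $q_t$, takes at most $m$ values and whose query output is a fixed function $o(q_t)$.

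For the term $\Omega(\log\log n+\log(1/\eps))$, build test points $N_0<\cdots<N_L\le n$ with $N_{i+1}/N_i\ge 1+5\eps$ and $L=\Theta(\log n/\eps)$ (greedily, $N_{i+1}=\lceil(1+5\eps)N_i\rceil$). The key elementary fact is: for any deterministic $\le m$-state object, if it is correct at $N_i$ and $N_j$ with $i<j$ then $o(q_{N_i})\le(1+\eps)N_i<(1-\eps)(1+5\eps)N_i\le(1-\eps)N_j\le o(q_{N_j})$, so $q_{N_i}\ne q_{N_j}$; hence it is correct on at most $m$ of the $N_i$. Applying this inside $\mathcal G$ and taking expectations, $m\ge\E[\,\#\{i:\text{correct at }N_i\}\mid\mathcal G\,]$, while linearity of expectation gives $\E[\,\#\{i:\text{wrong}\}\mid\mathcal G\,]<(L+1)\sqrt\delta$, so $m>(L+1)(1-\sqrt\delta)=\Omega(L)$, i.e.\ $S\ge\Omega(\log\log n+\log(1/\eps))$.

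For the term $\Omega(\min\{\log n,\log\log(1/\delta)\})$ I would use the randomized-to-deterministic reduction sketched in \cref{sec:overview}. When $\delta^{-\eps}\le n$ this is easy: repeating the previous construction inside $\{1,\dots,\lceil\delta^{-\eps/2}\rceil\}$ gives $L'=\Theta(\log(1/\delta))$ test points with consecutive ratios $\ge 1+5\eps$, and since $L'\sqrt\delta<1$ a union bound inside $\mathcal G$ yields an outcome in $\mathcal G$ correct at all of them, forcing $m\ge L'=\Omega(\log(1/\delta))$. In general (in particular when $\delta$ is so small that $\delta^{-\eps}>n$) one converts $\mathcal C$ into a \emph{bona fide} deterministic space-$S$ counter, paying a multiplicative blow-up of at most $g(S)=2^{2^{O(S)}}$ in the failure probability --- the bound $g(S)$ being, up to lower-order factors, the number of distinct deterministic $\le m$-state counters once outputs are discretized to the $(1+\eps)$-powers not exceeding $n$. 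A structural lemma then precludes small $m$: the state sequence of a deterministic $\le m$-state counter is eventually periodic with pre-period and period $\le m$, and if its period is $p$ then it errs at $t$ or at $t+p$ for every $t$ in the periodic part with $t<p/(4\eps)$; summing over residue classes modulo $p$ shows it is correct on at most $O(\eps n)+2m$ integers of $\{1,\dots,n\}$, hence wrong on $\ge n/2$ of them whenever $\eps<1/16$ and $m<n/8$. Consequently a deterministic counter correct on a $\ge 1-\delta\,g(S)$ fraction of $\{1,\dots,n\}$ cannot have $m<n/8$ unless $\delta\,g(S)\ge\tfrac12$; the first alternative gives $S\ge\log n-3$, the second gives $g(S)\ge 1/(2\delta)$ and hence $2^{O(S)}\ge\Omega(\log(1/\delta))$, i.e.\ $S\ge\Omega(\log\log(1/\delta))$. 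Either way $S\ge\Omega(\min\{\log n,\log\log(1/\delta)\})$.

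Finally, combining the two bounds gives $2S\ge\Omega\big((\log\log n+\log(1/\eps))+\min\{\log n,\log\log(1/\delta)\}\big)$, and a two-case check (according to whether $\log\log(1/\delta)\le\log n$) shows this is $\Omega(\min\{\log n,\log\log n+\log(1/\eps)+\log\log(1/\delta)\})$, completing the proof. The steps with the geometric test points and the eventual-periodicity lemma are routine; the main obstacle is the reduction in the third paragraph. Naively fixing $\mathcal C$'s coins produces a non-uniform, ``hard-wired coin sequence'' object whose state sequence need not be eventually periodic, so the structural rigidity fails for it; the reduction must instead exploit that a genuine randomized streaming algorithm applies the \emph{same} stochastic transition at every step, and one must verify that the resulting deterministic counter --- not just a coin-fixing --- inflates the failure probability by only $2^{2^{O(S)}}$. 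Establishing this (equivalently: that a failure probability $\delta$ which is polynomially small in the other parameters forces $\Omega(\log(1/\delta))$ states) is the technical heart of the lower bound.
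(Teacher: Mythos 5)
Your first bound (geometric test points at ratio $1+5\eps$, averaging over the algorithm's randomness, and the observation that correctness at two test points forces distinct memory states) is essentially the paper's own argument for the $\eps$-term, up to a minor overclaim: when $\eps\ll 1/n$ the greedy sequence contains only $\Theta(n)$ points, so that step only gives $S\ge\Omega(\min\{\log n,\log\log n+\log(1/\eps)\})$, which your final combination can absorb. The genuine gap is the $\Omega(\min\{\log n,\log\log(1/\delta)\})$ term, which is the heart of the theorem and which you yourself flag as unproved: everything there rests on converting the randomized counter into a \emph{time-invariant} deterministic $2^S$-state counter while inflating the failure probability by at most $g(S)=2^{2^{O(S)}}$ \emph{uniformly over all} $N\le n$, and no argument for this is given. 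The justification you gesture at (counting deterministic $m$-state counters) is not a derandomization: a randomized streaming algorithm is a mixture over hard-wired coin sequences, i.e.\ over time-inhomogeneous state evolutions, not over $m$-state automata, so your eventual-periodicity lemma has nothing to apply to --- exactly the obstacle you name. The ``easy case'' $\delta^{-\eps}\le n$ is also wrong as stated: when $\delta^{-\eps/2}\le 1/(5\eps)$ the test points are just the integers up to $\delta^{-\eps/2}=e^{\eps\ln(1/\delta)/2}$, which can be $O(1)$ (e.g.\ $\eps\approx 1/\log(1/\delta)$), not $\Omega(\log(1/\delta))$. And the hard case cannot be dodged: with $\eps=1/4$ and $\delta=2^{-n}$ the theorem demands $\Omega(\log n)$, while your first bound supplies only $O(\log\log n)$.

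The paper's reduction is different and sidesteps any uniform-in-$N$ blow-up. Set $T=\lfloor\min\{n/4,\sqrt{\log(1/\delta)}\}\rfloor$ and assume for contradiction $2^S\le\sqrt{T}$. Derandomize only the state evolution by always moving to the \emph{most likely} next state; each modal choice has probability at least $2^{-S}$, so the true algorithm follows this single path for $N+1$ steps with probability at least $2^{-S(N+1)}$, and hence, conditioned on following it, errs with probability at most about $\delta\,2^{S(N+1)}$. This blow-up grows with $N$, but correctness is invoked only for $N\le 4T$, where under the contradiction hypothesis it is at most $\delta\,T^{2T+1/2}\le\delta(\log(1/\delta))^{\sqrt{\log(1/\delta)}+1/4}<1/3$. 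Since the modal path visits at most $2^S\le\sqrt{T}$ states, two of the times in $\{1,\dots,T/2\}$ share a state, and pumping the resulting cycle produces some $N_3\in[2T,4T]$ with the same state; the (unchanged, possibly randomized) query algorithm applied to one memory state cannot output a value below $T$ with probability $2/3$ and also at least $T$ with probability $2/3$, a contradiction. This gives $S\ge\Omega(\log T)=\Omega(\min\{\log n,\log\log(1/\delta)\})$, which combined with the $\eps$-term yields the theorem. The hypothesis $2^S\le\sqrt{T}$ is used twice --- to tame the blow-up and to force the state collision within $T/2$ steps --- and this is precisely the device that replaces the uniform $2^{2^{O(S)}}$ reduction you could not establish; the periodicity lemma over all of $\{1,\dots,n\}$ and the census of deterministic automata are then unnecessary.
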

The first observation is that conditioned on using no more than $S$ bits of space, we have
\[
	\Pr(|N-\hat N|>\eps N\mid \textrm{use at most $S$ bits})<\delta/\Pr(\textrm{use at most $S$ bits})\leq \sqrt{\delta}.
\]
Hence, we may assume that $\cC$ always uses at most $S$ bits of space, at the cost of increasing the failure probability to $\sqrt{\delta}$, which is inconsequential since the dependence on $\delta$ in the space bound is $\log\log(1/\delta)$.
In the following, we assume that $\cC$ never uses more than $S$ bits.

Let $T=\lfloor \min\{n/4, \frac{\log (1/\delta)}{4\log\log(1/\delta)}\}\rfloor$.
Then for every $N=1,\ldots,T/2$, $\mathcal{C}$ outputs $\hat N$ that is less than $T$ with probability $1-\delta$, and for every $N=2T,2T+1,\ldots,4T$, $\mathcal{C}$ outputs $\hat N$ that is at least $T$ with probability $1-\delta$.
In particular, $\cC$ distinguishes $N\in [1, T/2]$ and $N\in [2T, 4T]$ with probability $1-\delta$.
In the following, we show that any $\cC$ that distinguishes the two cases with probability $1-\delta$ must use $\log T-O(1)$ bits of space.
We assume for contradiction that $S\leq \log (T/4)$.

First, let us consider the following ``derandomization'' of $\cC$.
$\cC$ uses no more than $S$ bits of space, hence, it has at most $2^S$ different memory states.
When \textbf{Init()} is called, the algorithm generates a (possibly random) initial memory state.
Each time \textbf{Increment()} is called, the algorithm examines the current state and updates the memory to a possibly different state (and possibly randomly).
Let the ``deterministic'' version of the algorithm $\cC_\dett$ have the same query algorithm as $\cC$, but when \textbf{Init()} or \textbf{Increment()} is called, it examines the current state and the distribution of the new state (or the initial state) according to $\cC$; instead of updating the memory according to this distribution, $\cC_\dett$ always updates it to the state with the \emph{highest} probability in this distribution (in case of tie, pick the lexicographically smallest).

Now let us analyze the error probability of $\cC_{\dett}$.
The initialization and increment algorithms are called exactly $N+1$ times in total.
Since $\cC_{\dett}$ picks the state with the highest probability each time, which has probability at least $2^{-S}$, the probability that the execution of $\cC$ follows the exact same path as $\cC_{\dett}$ is at least
\[
	\left(2^{-S}\right)^{N+1}.
\]
Therefore, conditioned on the execution of $\cC$ following the same path, its error probability is at most
\[
	\delta\cdot \left(2^{S}\right)^{N+1}.
\]
When $N\leq 4T$, it is at most
\[
	\delta\cdot (T/4)^{4T+1}\leq \delta\cdot (\log(1/\delta)/(16\log\log(1/\delta)))^{\log(1/\delta)/\log\log(1/\delta)+1}<1/3 .
\]
That is, the error probability of $\cC_{\dett}$ is at most $1/3$, for every $N\in[1,T/2]\cup[2T,4T]$.

On the other hand, since both initialization and increment algorithms are deterministic, we may apply an argument similar to the ``pumping lemma'' for DFAs.
Since $2^S\leq T/4$, there exists $1\leq N_1<N_2\leq T/2$ such that $\cC_{\dett}$ reaches the same memory state after $N_1$ or $N_2$ increments.
Again by the fact that the increment algorithm is deterministic, $\cC_{\dett}$ must reach the same memory state after $N_1+k(N_2-N_1)$ increments, for all integer $k\geq 0$.
In particular, there exists $N_3\in [2T, 4T]$ such that $\cC_{\dett}$ reaches this memory state after $N_3$ increments.
However, by the assumption of the algorithm, the query algorithm distinguishes between $N_1$ increments and $N_3$ increments with probability at least $2/3$, which is impossible as the algorithm reaches the same memory state in the two cases.
This proves that $S\geq \log T-O(1)$, i.e., 
\begin{equation}\label{eqn_lb_delta}
	S\geq \min\{\log n-O(1), \log\log (1/\delta)-O(\log\log\log(1/\delta)))\}.
\end{equation}

Finally, we show that $S\geq \min\{\log n,\log\log n+\log (1/\eps)\}-O(1)$ as long as $\delta\in(0,\sqrt{1/2})$.
Let $N_j=\left\lceil (e^{16\eps j}-1)/\eps\right\rceil$, and consider incrementing the counter $N_j$ times for an unknown $j$.
Observe that for $j\geq 0$, we have
\begin{align*}
	(1-\eps)N_{j+1}-(1+\eps)N_j&\geq(1-\eps)(e^{16\eps (j+1)}-1)/\eps-(1+\eps)(e^{16\eps j}-1)/\eps-(1+\eps) \\
	&=((1-\eps)e^{16\eps}-(1+\eps)) e^{16\eps j}/\eps-(3+\eps) \\
	&\geq ((1-\eps)(1+16\eps)-(1+\eps))/\eps-(3+\eps) \\
	&=11-17\eps \\
	&> 0.
\end{align*}
Therefore, for every $j\geq 0$ and $j\leq (1/16\eps)\ln (\eps n+1)$ (hence, $N_j\leq n$), $\cC$ recovers $j$ with probability $1-\delta>1/5$, if the counter is incremented $N_j$ times.
By fixing the random bits used by $\cC$, at least $1/5$ fraction of such $j$ is successfully recovered.
The algorithm must reach a different final state for all such $j$, implying that
\[
	2^S\geq \frac{1}{5}\cdot (1/16\eps)\ln (\eps n+1)=\Omega((1/\eps)\log (\eps n+1)).
\]
When $\eps<1/n$, it is $\Omega((1/\eps)(\eps n))=\Omega(n)$, and
\[
	S\geq \log n-O(1).
\]
When $1/n\leq \eps<1/\sqrt{n}$, we have
\[
	S\geq \log (1/\eps)-O(1)\geq \log(1/\eps)+\log\log n-O(\log\log(1/\eps)).
\]
When $\eps\geq 1/\sqrt{n}$, we have
\[
    S\geq \log (1/\eps)+\log\log(\eps n)-O(1)\geq \log(1/\eps)+\log\log n-O(1).
\]
In all three cases, the bounds imply 
\begin{equation}\label{eqn_lb_eps}
	S\geq \min\{\log n-O(1),\log\log n+\log (1/\eps)-O(\log\log(1/\eps))\}.
\end{equation}

Finally, by \eqref{eqn_lb_delta} and \eqref{eqn_lb_eps}, we conclude that
\begin{align*}
	S&\geq \min\{\log n-O(1),\max\{\log\log n+\log(1/\eps)-O(\log\log(1/\eps)), \log\log (1/\delta)-O(\log\log\log(1/\delta))\} \\
	&=\Omega(\min\{\log n,\log\log n+\log(1/\eps)+\log\log (1/\delta)\}).
\end{align*}
proving the claimed lower bound.

\section{Philosophical digression: the value of implementation}\label{sec:implementation}

\begin{figure}[H]
  \begin{center}
    \scalebox{.25}{\includegraphics{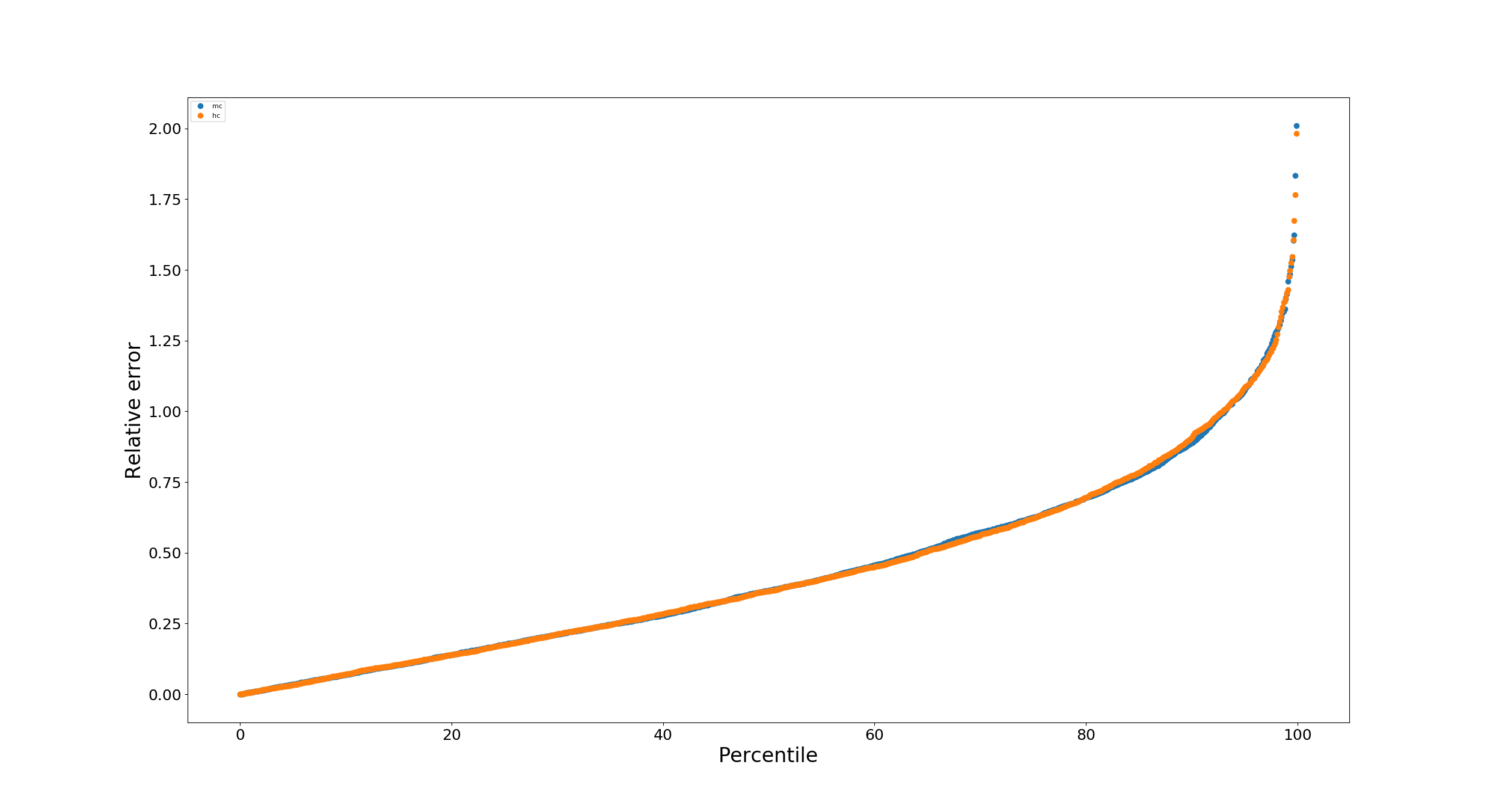}}
  \caption{Results of experimental comparison of the Morris counter and a simplified version of the algorithm of \cref{sec:new-algo}.}\label{fig:experiments}
  \end{center}
\end{figure}

We share in this section a historical note on the development of this work, which may serve the reader as evidence of the value of implementation. Chronologically, we first developed and analyzed the algorithm of \cref{sec:new-algo} and proved the lower bound in \cref{sec_lb}. In the days afterward, excited by the prospect of having a new and improved algorithm for such a fundamental problem, we implemented the Morris Counter as well as (a simplified version of) the algorithm of \cref{sec:new-algo} (and this simplified algorithm is itself similar to the algorithm of \cite{Csuros10}) to compare. We ran several experiments. In one, we did the following 5,000 times for each algorithm, parameterized to use only 17 bits of memory: pick a uniformly random integer $N\in[500000,999999]$ (thus a 20-bit number) and perform $N$ increments. The results of this experiment are in \cref{fig:experiments}. The orange plot represents our algorithm, and the blue plot is the Morris Counter. For each respective algorithm's color, a dot plotted at point $(x,y)$ means that in $x\%$ of the trial runs (out of 5,000), the relative multiplicative error of the algorithm's estimate was $y\%$ or less. In other words, we plotted the empirical CDFs of the relative errors of each algorithm. For example, the plot indicates that neither algorithm ever had relative error more than $2.37\%$ in 5,000 runs. The experimental results are plainly apparent: the two algorithms' empirical performances are {\it nearly identical}! Witnessing this plot convinced us that the previously known analyses of the Morris Counter, an algorithm that has been known for over 40 years and taught in numerous courses, were most likely suboptimal and that the Morris Counter itself is most likely an optimal algorithm for the problem. With the confidence gained from the experimental results, we sought a new and improved analysis of the Morris Counter and succeeded. Thus it seems from this anecdote, implementation can sometimes be valuable even for purely theoretical work.

  \section*{Acknowledgments}
  We thank Eric Price for pointing out the content of \cref{rem:eric} and allowing us to include it here.

  \newcommand{\etalchar}[1]{$^{#1}$}

\appendix

\section{Tweaking the Morris Counter is necessary}
In this section we show that the modification from the vanilla Morris Counter to ``Morris+'' described in \cref{sec:intro} is necessary. Recall the modification: when using Morris($a$), we maintain a deterministic counter $X'$ in parallel. During increments, we process the increment both by Morris($a$) and by deterministically incrementing $X'$, unless its value is $N_a+1$ in which case we do not alter it. During queries, if $X' \le N_a$, we return $X'$; otherwise we return the estimator from Morris($a$) based on $X$. We set $N_a = 8/a$, as suggested by the analysis in \cref{sec:morris}.

We now show that if one {\it does not} modify the Morris Counter but simply uses Morris($a$) for $a = \eps^2/(8\ln(1/\delta))$ as suggested in \cref{sec:morris}, then when $\delta<\eps^{8/3}c^2/16$, $\eps<1/4$ and the counter value equals $N = N'_a := c\eps^{4/3} /a\geq 2$ for a constant $c\le 2^{-8}$, the probability that the Morris Counter outputs an estimator $\hat N<(1-\eps)N$ is much larger than $\delta$. Note that our analysis requires switching from a deterministic counter to the Morris Counter when $N \ge \Omega(1/a)$ and not $\Omega(\eps^{4/3}/a)$, but the impact on memory complexity is at most a factor of three (and less as $N$ grows): using a deterministic counter up until $N=r$ requires an additional $\lceil\log_2 r\rceil$ bits. Thus the difference between $r = c_1/a$ versus $r = c_2\eps^{4/3}/a$ is the difference between $\log r = \log(c_1) + 3 + \log\log(1/\delta) + 2\log(1/\eps)$ versus $\log r = \log(c_2) + \log\log(1/\delta) + \frac{2}{3}\log(1/\eps)$; i.e.\ the dependence on $\log(1/\eps)$ differs by a factor of three. Thus our analysis here shows that for small $\delta$, our choice of transition point $r = 8/a$ from a deterministic counter to using the Morris Counter is almost optimal, up to affecting the memory by a multiplicative factor of at most three.

We now show why Morris($a$) will fail with probability much larger than $\delta$. Consider the event $\mathcal E$ that the Morris Counter increments $X$ in the first $t$ increment operations, and its value remains equal to $t$ in the last $N-t$ increments, for $t= \lfloor\ln(1+(1-2\eps)\eps^{4/3} c)/\ln(1+a)\rfloor$.
Recall the estimator is $\hat N = a^{-1}((1+a)^X - 1)$. Thus conditioned on $\mathcal E$,
\begin{align*}
  \hat N&= \frac1a \cdot\left((1+a)^t-1\right) \\
        &\le \frac 1a \cdot\left(1+(1-2\eps)\eps^{4/3} c - 1\right)\\
        &= (1-2\eps)N\\
        &<(1-\eps)N
\end{align*}

On the other hand, note that $t\geq\ln(1+(1-2\eps)\eps^{4/3} c)/\ln(1+a)-1\geq \frac{1}{a}\ln(1+(1-2\eps)\eps^{4/3}c)-1$, and $t\leq N$.
The probability of $\mathcal E$ is at least
\begin{align*}
	\Pr[\mathcal E] &= \prod_{i=0}^{t-1}(1+a)^{-i}\cdot \left(1-(1+a)^{-t}\right)^{N-t} \\
	&\geq (1+a)^{-t^2}\cdot \left(1-(1+a)\left(1+(1-2\eps)\eps^{4/3} c\right)^{-1}\right)^{N-\frac{1}{a}\ln(1+(1-2\eps)\eps^{4/3} c)+1} \\
	&= (1+a)^{-t^2}\cdot \left(\frac{1+(1-2\eps)\eps^{4/3} c-(1+a)}{1+(1-2\eps)\eps^{4/3}c}\right)^{\frac{1}{a}\left(c\eps^{4/3}-\ln(1+(1-2\eps)\eps^{4/3} c)\right)+1} \\
	&\geq (1+a)^{-N^2}\cdot \left(\frac{\eps^{4/3}c}{4}\right)^{\frac{1}{a}\left(c\eps^{4/3}-\ln(1+(1-2\eps)\eps^{4/3} c)\right)+1},
	\intertext{which by the fact that $\ln(1+x)\leq x$ and $\ln(1+x)\geq x-x^2/2$ for $x<1$, is}
	&\geq \frac{\eps^{4/3}c}{4}\cdot e^{-aN^2}\cdot \left(\frac{\eps^{4/3}c}{4}\right)^{\frac{1}{a}\left(c\eps^{4/3}-(1-2\eps)\eps^{4/3} c+((1-2\eps)\eps^{4/3} c)^2/2\right)} \\
	&=\frac{\eps^{4/3}c}{4}\cdot e^{-\frac{1}{a}(\eps^{4/3}c)^2}\cdot e^{-\frac{\ln(4/(\eps^{4/3}c))}{a}\left(2\eps^{7/3}c+\eps^{8/3}c^2/2\right)} \\
	&\geq \frac{\eps^{4/3}c}{4}\cdot e^{-\frac{\eps^2}{32a}}\cdot e^{-\frac{\eps^2}{a}\left(4\eps^{1/3}c\ln(4/(\eps^{4/3}c)\right)} \\
	&\geq \frac{\eps^{4/3}c}{4}\cdot e^{-\frac{\eps^2}{16a}} \\
	&=\frac{\eps^{4/3}c}{4}\cdot \sqrt{\delta}.
\end{align*}

When ${\delta}< \eps^{8/3}c^2/16$, this is larger than $\delta$.
Therefore, Morris$(a)$ fails to provide a $(1-\eps)$-approximation for $N$ with probability at least $\delta$.


\begin{thebibliography}{ACH{\etalchar{+}}13}

\bibitem[ACH{\etalchar{+}}13]{AgarwalCHPWY13}
Pankaj~K. Agarwal, Graham Cormode, Zengfeng Huang, Jeff~M. Phillips, Zhewei
  Wei, and Ke~Yi.
\newblock Mergeable summaries.
\newblock {\em {ACM} Trans. Database Syst.}, 38(4):26:1--26:28, 2013.

\bibitem[AJKS02]{AjtaiJKS02}
Mikl\'{o}s Ajtai, T.~S. Jayram, Ravi Kumar, and D.~Sivakumar.
\newblock Approximate counting of inversions in a data stream.
\newblock In {\em Proceedings on 34$^{\text{th}}$ Annual {ACM} Symposium on
  Theory of Computing (STOC)}, pages 370--379, 2002.

\bibitem[AMS99]{AlonMS99}
Noga Alon, Yossi Matias, and Mario Szegedy.
\newblock The space complexity of approximating the frequency moments.
\newblock {\em J. Comput. Syst. Sci.}, 58(1):137--147, 1999.

\bibitem[BDW19]{BhattacharyyaDW19}
Arnab Bhattacharyya, Palash Dey, and David~P. Woodruff.
\newblock An optimal algorithm for $\ell_1$-heavy hitters in insertion streams
  and related problems.
\newblock {\em {ACM} Trans. Algorithms}, 15(1):2:1--2:27, 2019.

\bibitem[BLM13]{BoucheronLM13}
Stephane Boucheron, Gabor Lugosi, and Pascal Massart.
\newblock {\em Concentration Inequalities: A Nonasymptotic Theory of
  Independence}.
\newblock Oxford University Press, 2013.

\bibitem[Csu10]{Csuros10}
Mikl\'{o}s Csur\"{o}s.
\newblock Approximate counting with a floating-point counter.
\newblock In {\em Proceedings of the 16th Annual International Conference on
  Computing and Combinatorics (COCOON)}, pages 358--367, 2010.

\bibitem[CY20]{CormodeY20}
Graham Cormode and Ke~Yi.
\newblock {\em Small summaries for big data (draft)}.
\newblock 2020.
\newblock \url{http://dimacs.rutgers.edu/~graham/ssbd.html}.

\bibitem[Fla85]{Flajolet85}
Philippe Flajolet.
\newblock Approximate counting: {A} detailed analysis.
\newblock {\em {BIT} Comput. Sci. Sect.}, 25(1):113--134, 1985.

\bibitem[GS09]{GronemeierS09}
Andr\'{e} Gronemeier and Martin Sauerhoff.
\newblock Applying approximate counting for computing the frequency moments of
  long data streams.
\newblock {\em Theory Comput. Syst.}, 44(3):332--348, 2009.

\bibitem[Ind06]{Indyk06}
Piotr Indyk.
\newblock Stable distributions, pseudorandom generators, embeddings, and data
  stream computation.
\newblock {\em J. {ACM}}, 53(3):307--323, 2006.

\bibitem[JW19]{JayaramW19}
Rajesh Jayaram and David~P. Woodruff.
\newblock Towards optimal moment estimation in streaming and distributed
  models.
\newblock In {\em APPROX}, pages 29:1--29:21, 2019.

\bibitem[KNW10]{KaneNW10}
Daniel~M. Kane, Jelani Nelson, and David~P. Woodruff.
\newblock On the exact space complexity of sketching and streaming small norms.
\newblock In {\em Proceedings of the 21$^{\text{st}}$ Annual {ACM-SIAM}
  Symposium on Discrete Algorithms (SODA)}, pages 1161--1178, 2010.

\bibitem[Lum18]{Lumbroso18}
J\'{e}r\'{e}mie~O. Lumbroso.
\newblock The story of {HyperLogLog}: How {Flajolet} processed streams with
  coin flips.
\newblock {\em CoRR}, abs/1805.00612v2, 2018.

\bibitem[Mor78]{Morris78}
Robert~H. Morris.
\newblock Counting large numbers of events in small registers.
\newblock {\em Commun. {ACM}}, 21(10):840--842, 1978.

\bibitem[Red]{Redis}
Using {Redis} as an {LRU} cache.
\newblock \url{https://redis.io/topics/lru-cache}. Last accessed Oct 22, 2020.

\end{thebibliography}
\end{document}